\begin{document}
\title{Approximating Smallest Containers for Packing Three-dimensional Convex Objects}

\author{Helmut Alt%
  \thanks{\texttt{alt@mi.fu-berlin.de}} }

\author{Nadja Scharf%
  \thanks{\texttt{nadja.scharf@fu-berlin.de} \\This research was partially funded by DFG (Deutsche Forschungsgemeinschaft) under grant no.~AL253/7-2.}}
\affil{Institute of Computer Science, Freie Universität Berlin, Takustr.~9, 14195~Berlin, Germany}
	
	\maketitle
	
	\begin{abstract}
	We investigate the problem of computing a minimal-volume container for the non-overlapping packing of a given set of three-dimensional convex objects. Already the simplest versions of the problem are $\mathcal{NP}$-hard so that we cannot expect to find exact polynomial time algorithms. We give constant ratio approximation algorithms for packing axis-parallel (rectangular) cuboids under translation into an axis-parallel (rectangular) cuboid as container, for cuboids under rigid motions into an axis-parallel cuboid or into an arbitrary convex container, and for packing convex polyhedra under rigid motions into an axis-parallel cuboid or arbitrary convex container. This work gives the first approximability results for the computation of minimal volume containers for the objects described.

	\end{abstract}
	\section{Introduction}	The problem of efficiently packing objects arises in a large variety of contexts. Apart from the obvious ones, where concrete objects need to be packed for transportation or storage, there are more abstract ones, for example cutting stock or scheduling. Given a set of objects that have to be cut out from the same material the objective is to minimize the waste, i.e., place the pieces to be cut out as close as possible. In the case of scheduling, a list of jobs is given. Each job needs a certain amount of given resources and the aim is to minimize under certain constraints this need of resources such as time, space,  or number of machines. Altogether, this situation can be described as a problem of packing high-dimensional cuboids into a strip with bounded side lengths.
So,  both problems can be viewed as a given list of objects for which a container of minimal size is wanted.

In this work, we consider the more general and abstract problem of packing three-dimensional convex polyhedra into a minimum volume container. All variants of this problem are NP-hard and we will develop constant factor approximation algorithms for some of them. The worst case constant factors are still very high, but probably they will be much lower for realistic inputs. The major aim of this paper, however, is to show the existence of constant factors at all, i.e., that the problems belong to the complexity class APX.

	\subsubsection*{Related Work}
	So far, there are only few results about finding containers of minimum volume. Related problems include strip packing and bin packing. In two-dimensional strip packing the width of a strip is given and the objects should be packed in order to minimize the length of the strip used. In three dimensions, the rectangular cross section of the strip is fixed. Bin-packing is the problem where the complete container is fixed and the objective is to minimize the number of containers to pack all objects. For both problems usually only translations are allowed to pack the objects. 
	
For two-dimensional bin packing there exists an algorithm with an asymptotic approximation ratio of 1.405~\cite{2D-BP_asym1.405} and Bansal et al.\ proved that there cannot be an APTAS unless $\mathcal{P=NP}$~\cite{2D-BP_NoAPTAS}. For two-dimensional strip packing there exists an AFPTAS~\cite{2D-SP_AFPTAS}. In three dimensions there are algorithms with an asymptotic approximation ratio of 4.89 for bin packing~\cite{3D-BP_4.89} and an asymptotic approximation ratio of $\frac{3}{2} + \varepsilon$ for strip packing~\cite{JansenPraedel}. The best known worst case approximation ratio for three-dimensional strip packing is $\frac{29}{4}$~\cite{Knapsack}.
	
	For two dimensions, von Niederhäusern~\cite{MASTER} gave algorithms for packing rectangles or convex polygons in a minimal-area rectangular container with approximation ratios 3 and 5 respectively. A recent result shows that packing convex polygons under translation into a minimum-area rectangular or convex container can be approximated with ratios $17.45$ and $27$ respectively~\cite{ConvexPolygons}.
	
	\textsc{PARTITION} can be reduced to one-dimensional bin packing and one\hyp dimensional bin packing is a special case of higher dimensional bin or strip packing. If one-dimensional bin packing could be approximated with a ratio smaller than $\frac{3}{2}$, we could solve \textsc{PARTITION}. Therefore, none of the mentioned problems can be approximated better than with ratio $\frac{3}{2}$ unless $\mathcal{P}= \mathcal{NP}$. \textsc{PARTITION} can also be reduced to our problem showing $\mathcal{NP}$-hardness.
	\subsubsection*{Our Results}
	In this work we give the first approximation results for packing three-dimensional convex objects in a minimum-volume container. For packing axis-parallel rectangular cuboids under translation into an axis-parallel rectangular cuboid as a container, we achieve a $7.25+\varepsilon$ approximation. If we allow the cuboids to be packed under rigid motions (translation and rotation) then we achieve an approximation ratio of $17.737$ for an axis-parallel cuboid as container and an approximation ratio of $29.135$ for an arbitrary convex container. For packing convex polyhedra under rigid motions we achieve an approximation ratio of $277.59$ for computing an axis-parallel cuboid as container and $511.37$ for a convex container.
	
	\section{Preliminaries and Notation}
	For most algorithms considered here, the input is a set of rectangular boxes $\mathcal{B} = \left\lbrace b_1, b_2, \dots b_n\right\rbrace$. We denote a box $b_i$ in axis-parallel orientation by a tuple of its height, width and depth $\left(h_i, w_i, d_i\right)$. We denote by $h_{\max} = \max \left\lbrace h_i \mid b_i \in \mathcal{B}\right\rbrace$, $w_{\max} = \max \left\lbrace w_i \mid b_i \in \mathcal{B}\right\rbrace$ and $d_{\max} = \max \left\lbrace d_i \mid b_i \in \mathcal{B}\right\rbrace$.

For points $P$ and $Q$ we denote by $\overline{PQ}$ the line segment between $P$ and $Q$ of length $\vert PQ\vert$. $\overrightarrow{PQ}$ denotes the vector from $P$ to $Q$.
When we write "axis-parallel container" we mean "axis-parallel rectangular cuboid as a container". We use the term box as a synonym for rectangular cuboid.
\begin{newdef}[strip packing]
An instance for the \emph{strip packing problem} consists of an axis parallel strip with all dimensions fixed except for one, and a set of axis parallel boxes. Call the open dimension the height. The aim is to pack the boxes under translation into the strip such that the used height gets minimized. The boxes are not allowed to overlap.
\end{newdef} 
\begin{newdef}[orthogonal minimal container packing---\textsc{OMCOP}] 
An instance of this problem is a set of convex polyhedra. The aim is to pack these polyhedras non-overlapping such that the minimal axis-parallel container has minimal volume. Variants include the kind of motions allowed or that more specialized objects are to be packed.
\end{newdef}
This work only considers algorithms in two or three dimensions. For ease of notation we always assume the lower left (front) corner of the container to lie in the origin. $V_{\text{opt}}$ denotes the minimal possible volume for a container.

The following algorithm was given by von Niederhäusern~\cite{MASTER}. It will be used later as a subroutine. For an example see Figure~\ref{fig:Algo2D}.

\begin{algorithm}[H]
\KwIn{A list $\mathcal{S}$ of rectangles $r_i$, denoted by their width $w_i$ and height $h_i$, a width for the strip $w$}
\nl Order the rectangles in $\mathcal{S}$ by decreasing width, such that if $i<j$ then $w_i  \geq w_j$.\\
\nl Split $\mathcal{S}$ in sublists $\mathcal{S}_j = \left\lbrace r_i \in \mathcal{S} \mid \frac{w}{2^{j-1}} \geq w_i > \frac{w}{2^{j}} \right\rbrace$ for $j\geq 1$.\\
\nl Start with packing the rectangles in $\mathcal{S}_1$ on top of each other in the strip $\left[0, w\right] \times \left[0, \infty\right)$.\\
\nl Split the remaining strip in two substrips with width $\frac{w}{2}$ and pack the rectangles in $\mathcal{S}_2$ one after another into these substrips. $r_i$ is packed in the substrip with current minimal height.\\
\nl Again split the substrips into two and pack $S_3$. Iterate that process until everything is packed.
\caption{}
\label{alg:3Appr2D}
\end{algorithm}

\begin{remark}
Note that the strip is half filled with rectangles up to the lower boundary of the highest rectangle that touches the upper end of the packing. Otherwise,this rectangle could have been placed lower. That means that the strip is half filled with rectangles except for a part with area at most $w\cdot h_{\max}$.
\label{rem:halfFilled}
\end{remark}
\begin{remark}
Steps 1 and 2 can be done in $\mathcal{O}\left(n \log n\right)$ time where $n$ is the size of $\mathcal{S}$. If we store all substrips in a height-balanced tree and split a strip only in two if it gets used, we can perform steps 3 to 5 in $\mathcal{O}\left(n \log n\right)$.
\label{rem:runtime2Dpack}
\end{remark}
\begin{figure}[ht]
\centering
\includegraphics[scale=0.4]{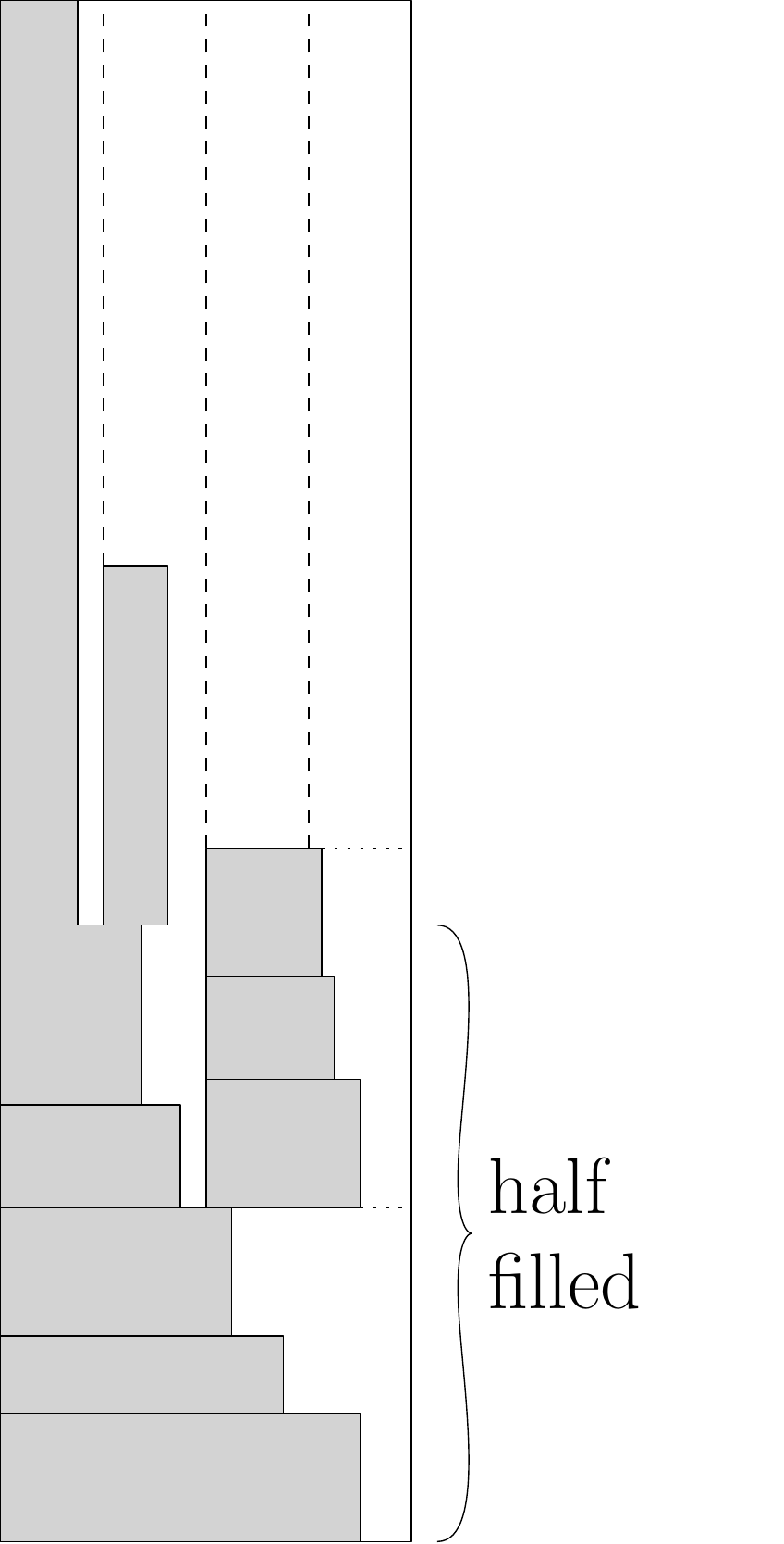}

\caption{Result of Algorithm~\ref{alg:3Appr2D}}
\label{fig:Algo2D}
\end{figure}

	\section{Reduction from 3D-\textsc{OMCOP} to Strip Packing}
	In this section we consider the version of \textsc{OMCOP} where the given objects are axis-parallel boxes that are to be packed under translation. The idea behind the reduction of \textsc{OMCOP} to strip packing is to test different base areas for the strip and to return the result with minimal volume. Assuming that the lower left corner of the base area is located at the origin, we test each point in a set $\mathcal{S}$ as a possible upper right corner for the base area. Testing means that we call a strip packing algorithm with the given boxes and the base area implied by the point of $\mathcal{S}$. $\mathcal{S}$ will be determined by a parameter $\varepsilon$: the smaller $\varepsilon$, the more elements $\mathcal{S}$ contains, the better the approximation ratio gets.

Note that for the width $W_{\text{opt}}$ of an optimal container, the following inequalities hold: 
\begin{enumerate}
\item $W_{\text{opt}} \leq W_{\sum}$, where $W_{\sum}$ denotes the sum of all widths of the boxes to be packed. It is an upper bound because the width of an optimal container has to be the sum of width of some of the objects. Otherwise they can be pushed together reducing the width of the container and thereby its volume.
\item $W_{\text{opt}} \geq w_{\max}$, where $w_{\max}$ denotes the width of the widest box. Since this box needs to be packed, this is a lower bound for the width of the container.
\end{enumerate}
The analogous bounds for the depth of an optimal container hold for the same reasons. In the following $H_{\text{opt}}$, $W_{\text{opt}}$ and $D_{\text{opt}}$ denote the height, width and depth of the same optimal container. Let $\varepsilon'= \frac{\varepsilon}{2\left(\varepsilon + \alpha\right)}$ for a constant $\alpha$ defined later.

The set $\mathcal{S}$ is obtained by dividing the intervals of possible width and depth logarithmically.
\begin{align*}
\mathcal{S} =
&\lbrace W_{\sum}\left(1-\varepsilon'\right)^i 
\mid i \in \mathbb{N}, 
W_{\sum}\left(1-\varepsilon'\right)^i > w_{\max} \rbrace 
\cup \lbrace w_{\max}\rbrace 
\times\\ 
&\lbrace D_{\sum}\left(1-\varepsilon'\right)^j
\mid j \in \mathbb{N}, 
D_{\sum}\left(1-\varepsilon'\right)^j > d_{\max} \rbrace 
\cup \lbrace d_{\max}\rbrace.
\end{align*}
For an example for $\mathcal{S}$ see Figure~\ref{fig:exampleS}.
\begin{figure}
\centering
\includegraphics[height=0.37\textwidth]{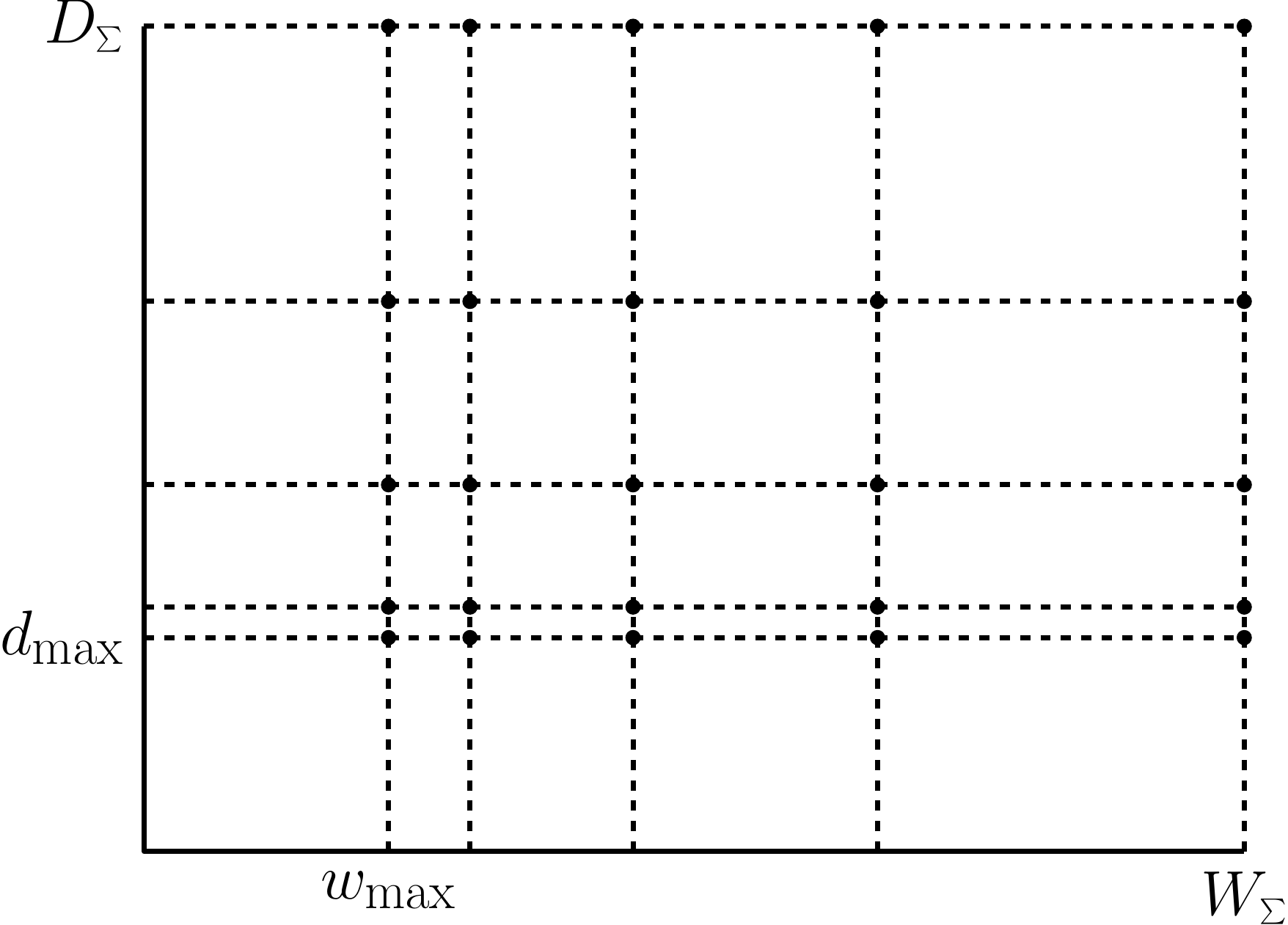}
\caption{Example for Set $\mathcal{S}$ with $\varepsilon = \frac{3}{4}$ and $\alpha=1.5$}

\label{fig:exampleS}
\end{figure}
\begin{thm}
If we use an $\alpha$-approximation algorithm to pack the boxes under translation into the strips and the set $\mathcal{S}$ defined above, we obtain an $\left(\alpha + \varepsilon\right)$-approximation for the \textsc{OMCOP} variant where $n$ axis aligned boxes are to be packed under translation. Its runtime is $\mathcal{O}\left(T(n)\frac{\log^2n}{\varepsilon^2}\right)$ where $T(n)$ is the runtime of the $\alpha$-approximation algorithm for strip packing.
\end{thm}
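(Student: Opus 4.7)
The plan is to exhibit a single base $(W,D)\in\mathcal{S}$ that is only slightly larger than the optimum footprint $(W_{\text{opt}},D_{\text{opt}})$, invoke the strip packer on this base, and observe that taking the minimum volume returned over all of $\mathcal{S}$ can only improve the result.

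\emph{Existence of a good base.} First I would locate $(W,D)\in\mathcal{S}$ with $W_{\text{opt}}\leq W\leq W_{\text{opt}}/(1-\varepsilon')$ and analogously for $D$. Since $w_{\max}\leq W_{\text{opt}}\leq W_{\sum}$ by the two inequalities stated before the definition of $\mathcal{S}$, the geometric sequence $W_{\sum}(1-\varepsilon')^{i}$ starts at or above $W_{\text{opt}}$; I take $W$ to be the smallest candidate in $\mathcal{S}$ that is still $\geq W_{\text{opt}}$. The upper bound then follows because either the next geometric term lies in $\mathcal{S}$ and is $<W_{\text{opt}}$, or that next term has already fallen to $\leq w_{\max}\leq W_{\text{opt}}$; in either case $W(1-\varepsilon')<W_{\text{opt}}$. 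The edge case $W_{\text{opt}}=w_{\max}$ is absorbed by the explicit inclusion of $w_{\max}$ in $\mathcal{S}$, and the same reasoning handles $D$.

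\emph{Volume guarantee.} Because $W\geq W_{\text{opt}}$ and $D\geq D_{\text{opt}}$, placing the boxes at their optimum positions realizes a valid packing in a strip of base $W\times D$ and height $H_{\text{opt}}$, so the optimal strip height for this base is at most $H_{\text{opt}}$. The $\alpha$-approximation algorithm therefore returns a strip of height at most $\alpha H_{\text{opt}}$, yielding a container of volume
\[
W\cdot D\cdot\alpha H_{\text{opt}} \;\leq\; \frac{\alpha}{(1-\varepsilon')^{2}}\,V_{\text{opt}}.
\]
A short calculation using $(1-\varepsilon')^{2}\geq 1-2\varepsilon'$ together with the choice $\varepsilon'=\varepsilon/\bigl(2(\varepsilon+\alpha)\bigr)$ shows that $\alpha/(1-\varepsilon')^{2}\leq\alpha+\varepsilon$; this is precisely the reason the parameter was set that way.

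\emph{Runtime.} The size of $\mathcal{S}$ is of order $\log_{1/(1-\varepsilon')}(W_{\sum}/w_{\max})\cdot\log_{1/(1-\varepsilon')}(D_{\sum}/d_{\max})$. Using $W_{\sum}/w_{\max}\leq n$, $D_{\sum}/d_{\max}\leq n$, and $\log\bigl(1/(1-\varepsilon')\bigr)=\Theta(\varepsilon')=\Theta(\varepsilon)$ when $\alpha$ is treated as a constant, this simplifies to $\mathcal{O}(\log^{2}n/\varepsilon^{2})$; each of these calls to the strip packer costs $T(n)$. The main obstacle I expect is the algebraic verification that the chosen $\varepsilon'$ is tight enough to absorb the $(1-\varepsilon')^{-2}$ blow-up cleanly into an additive $\varepsilon$, rather than a multiple thereof; once that manipulation is in hand, the rest is a direct combination of the strip packing guarantee with the geometric coverage of $\mathcal{S}$.
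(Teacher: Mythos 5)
Your proposal is correct and follows essentially the same route as the paper: pick the smallest grid point of $\mathcal{S}$ dominating $(W_{\text{opt}},D_{\text{opt}})$, use that the optimal packing certifies strip height at most $H_{\text{opt}}$ for that base, and absorb the $(1-\varepsilon')^{-2}$ factor via $(1-\varepsilon')^2\geq 1-2\varepsilon'$ and the choice of $\varepsilon'$, which gives $\alpha/(1-2\varepsilon')=\alpha+\varepsilon$ exactly. The algebraic step you flag as a potential obstacle goes through cleanly, and your runtime count matches the paper's.
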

\begin{proof}
There exist $a,b \in \mathbb{N}$ with 
$W_{\sum}\left(1-\varepsilon'\right)^{a+1} < W_{\text{opt}} \leq W_{\sum}\left(1-\varepsilon'\right)^{a}$ 
and $D_{\sum}\left(1-\varepsilon'\right)^{b+1} < D_{\text{opt}} \leq D_{\sum}\left(1-\varepsilon'\right)^{b}$. 
Eventually the boxes will be packed in a strip with base area $W \times D$ with $W=W_{\sum}\left(1-\varepsilon'\right)^{a}$ and $D=W_{\sum}\left(1-\varepsilon'\right)^{b}$. Since the extensions of the base area are at least the ones for an optimal container, we obtain a packing with height $H \leq \alpha H_{\text{opt}}$. The associated container has volume $V$ with
\begin{align*}
V &= HWD\\
&\leq \left(\alpha H_{\text{opt}}\right)\left(W_{\sum}\left(1-\varepsilon'\right)^{a}\right)\left(D_{\sum}\left(1-\varepsilon'\right)^{b}\right)\\
&\leq \left(\alpha H_{\text{opt}}\right)\left(\frac{W_{\text{opt}}}{1-\varepsilon'}\right)\left(\frac{D_{\text{opt}}}{1-\varepsilon'}\right)\\
&\leq \frac{\alpha}{\left(1-\varepsilon'\right)^2} V_{\text{opt}}\\
&\leq \frac{\alpha}{1-2\varepsilon'} = \left(\alpha + \varepsilon\right) V_{\text{opt}} &\text{, since } \varepsilon'= \frac{\varepsilon}{2\left(\varepsilon + \alpha\right)}
\end{align*}
The size of $\mathcal{S}$ is 
\begin{align*}
\vert\mathcal{S}\vert &= \left(\left\lceil \log_{\frac{1}{1-\varepsilon'}} W_{\sum} \right\rceil - \left\lfloor \log_{\frac{1}{1-\varepsilon'}} w_{\max} \right\rfloor +1 \right)\left(\left\lceil \log_{\frac{1}{1-\varepsilon'}} D_{\sum} \right\rceil - \left\lfloor \log_{\frac{1}{1-\varepsilon'}} d_{\max} \right\rfloor +1 \right)\\
&= \mathcal{O}\left(\frac{\log^2 n}{\left(-\log\left(1-\varepsilon'\right)\right)^2}\right) \text{, since } \frac{W_{\sum}}{w_{\max}}\leq n \text{, where $n$ is the number of boxes}\\
&= \mathcal{O}\left(\frac{\log^2 n}{\varepsilon^2}\right) \text{, since } -\log\left(1-x\right) \geq x \text{ for } x \in \left[0,1\right] \text{ and } \varepsilon' \in \Theta\left(\varepsilon\right),
\end{align*} 
and therefore we get the desired running time.
\end{proof}

If we use the algorithm given by Diedrich et al.~\cite{Knapsack} to pack the boxes into the strips, we obtain the following corollary.
\begin{cor}
There exists a $\left(7.25 + \varepsilon\right)$-approximation algorithm for packing axis-parallel boxes under translation into a minimal volume axis-parallel box with running time polynomial in both the input size and $\frac{1}{\varepsilon}$.
\end{cor}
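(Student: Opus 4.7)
The plan is to invoke the preceding theorem as a black box with $\alpha$ set to the best known worst-case approximation ratio for three-dimensional strip packing under translation. As noted in the Related Work section, Diedrich et al.~\cite{Knapsack} give a $\frac{29}{4}$-approximation algorithm for 3D strip packing, i.e., $\alpha = 7.25$. Feeding this algorithm into the reduction of the previous theorem immediately produces a $(7.25 + \varepsilon)$-approximation for the \textsc{OMCOP} variant in question.

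More concretely, I would first verify that the Diedrich et al.\ strip packing routine indeed runs in time $T(n)$ polynomial in $n$ (and independent of $\varepsilon$); this is stated in their paper. Then I would substitute $\alpha = 7.25$ into the conclusion of the theorem, obtaining overall running time $\mathcal{O}\!\left(T(n)\,\frac{\log^2 n}{\varepsilon^2}\right)$, which is polynomial in both the input size and $\frac{1}{\varepsilon}$ as required. The $\varepsilon'$ in the reduction is $\varepsilon'=\frac{\varepsilon}{2(\varepsilon+7.25)}$, which is $\Theta(\varepsilon)$, so the $\log$ factors behave as claimed.

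There is essentially no mathematical obstacle here: the corollary is just the instantiation of the theorem with the currently best available strip packing ratio. The only thing worth double-checking is that the theorem's proof only requires an \emph{absolute} (worst-case) approximation ratio rather than an asymptotic one---since the reduction invokes the strip packing algorithm on arbitrary inputs with arbitrary base areas, one cannot tolerate an additive term depending on the instance. Diedrich et al.'s $\frac{29}{4}$ bound is indeed an absolute worst-case ratio, which is why we use it rather than, say, the $\frac{3}{2}+\varepsilon$ asymptotic ratio of~\cite{JansenPraedel}. Once this is noted, the proof consists of a single sentence citing the theorem and plugging in $\alpha = 7.25$.
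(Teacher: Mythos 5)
Your proposal is correct and matches the paper's argument exactly: the paper derives the corollary by instantiating the preceding theorem with the $\frac{29}{4}$-approximation strip packing algorithm of Diedrich et al.~\cite{Knapsack}, i.e., $\alpha = 7.25$. Your additional remark that the reduction requires an absolute rather than asymptotic ratio is a worthwhile observation the paper leaves implicit.
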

	
	\section{Algorithms for Variants of \textsc{OMCOP}}
	In this section, we will give algorithms for variants of \textsc{OMCOP}. The basic idea is to get rid of the third dimension by dividing the set of objects into sets of objects with similar height and then packing those using an algorithm for two-dimensional boxes. These containers then get cut into pieces with equal base area and the pieces will be stacked on top of each other.
\subsection{Packing Cuboids under Translation}
Even though this algorithm gets outperformed by the construction in the previous section, we state it here as base for the algorithms for the other variants.
\begin{algorithm}[H]
\KwIn{Set of axis parallel boxes $\mathcal{B}=\left\lbrace b_1, \dots, b_n\right\rbrace$, parameter $\varepsilon$, parameter $c$}
\nl Partition $\mathcal{B}$ into subsets of boxes that have almost the same height:\\
		$\mathcal{B}_j=\left\lbrace b_i \in \mathcal{B} \mid h_{\max}\left(1-\varepsilon\right)^j < h_i \leq h_{\max}\left(1-\varepsilon\right)^{j -1}\right\rbrace$.\\
\nl Use Algorithm~\ref{alg:3Appr2D} to pack the boxes of every $\mathcal{B}_j$ into a strip with width $w_{\max}$ and height $h_{\max}\left(1-\varepsilon\right)^{j -1}$ considering the base areas of the boxes.\label{step:pack}\\
\nl Divide the strips into pieces with depth $\left(c-1\right) \cdot d_{\max}$, ignoring the last part of the strip of depth $d_{\max}$. (Parts of boxes contained in this part of the strip will be covered in step~\ref{step:extend} anyway.)\label{step:divide}\\
\nl Extend each piece to depth $c \cdot d_{\max}$ such that every box lies entirely in the piece its front lies in.\label{step:extend}\\
\nl Stack the pieces on top of each other\label{step:stack}.
\caption{}
\label{alg:withoutRotations}
\end{algorithm}

For an illustration of steps~\ref{step:divide}~to~\ref{step:stack} see Figure~\ref{fig:Algo}.\\

The first step can be done by sorting, so it needs $\mathcal{O}(n)$ time. The second step needs time $\mathcal{O}(n\log n)$ (see Remark~\ref{rem:runtime2Dpack}). The rest can be done in linear time. Therefore Algorithm~\ref{alg:withoutRotations} runs in $\mathcal{O}(n\log n)$ time.
\begin{figure}[ht]
    \centering
    \begin{subfigure}[b]{0.55\textwidth}
        \includegraphics[scale=0.45]{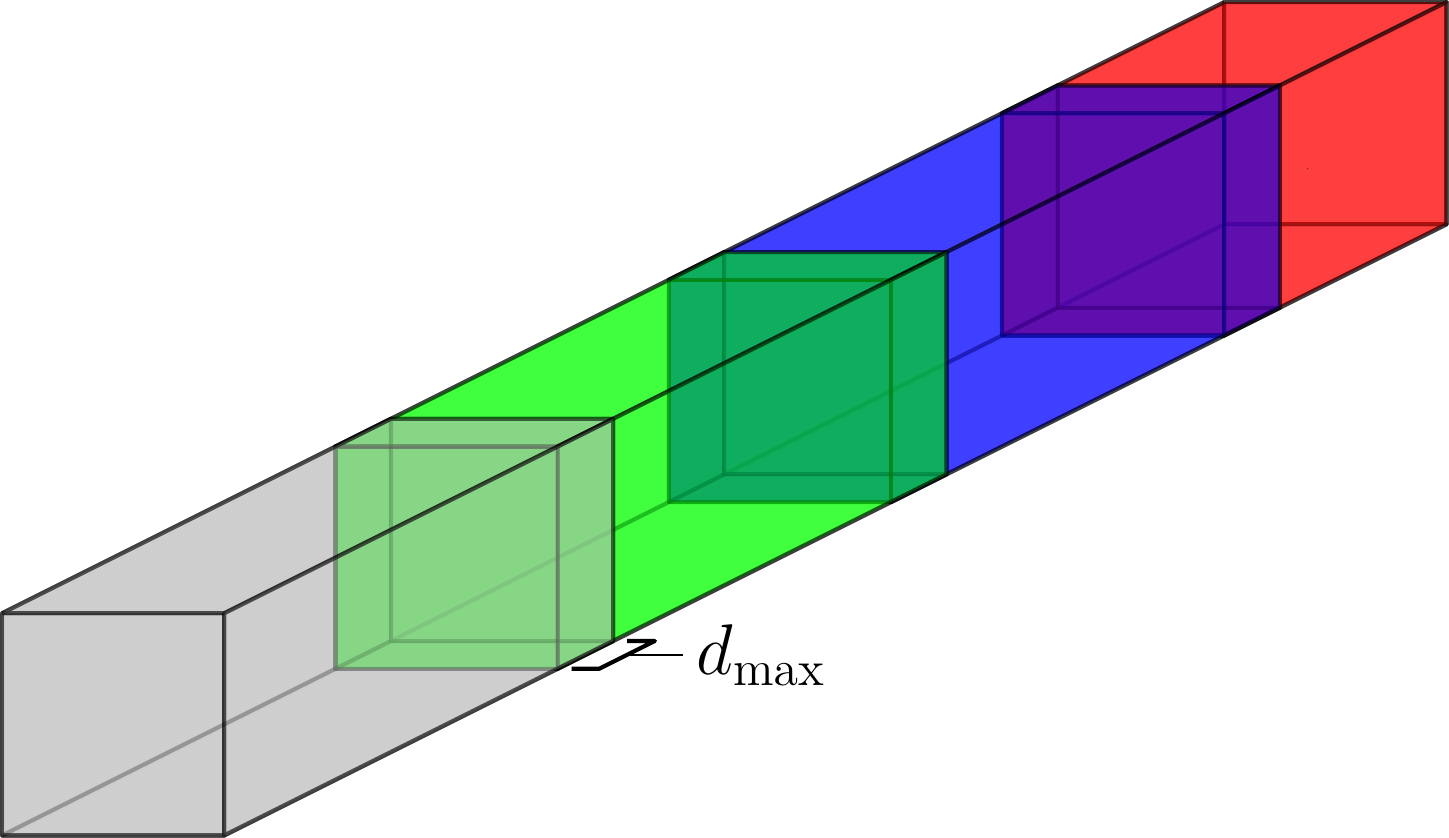}
        \caption{Cut strip}
    \end{subfigure}
    \begin{subfigure}[b]{0.4\textwidth}
        \includegraphics[scale=0.45]{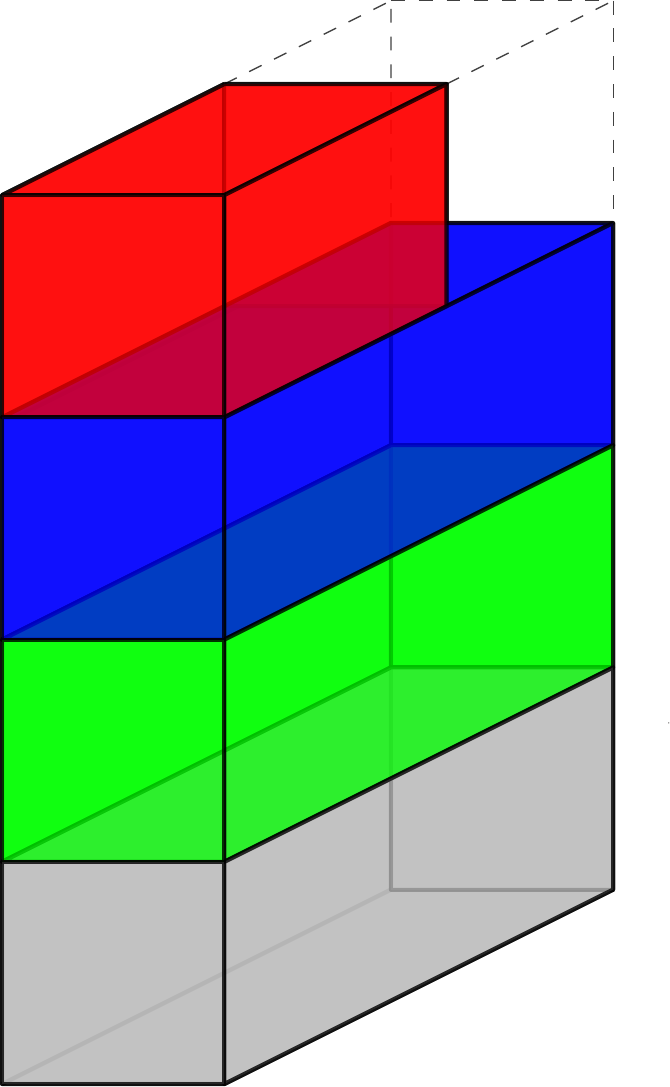}
        \caption{Pieces obtained from one strip stacked on top of each other}
    \end{subfigure}
  
    \caption{}
    \label{fig:Algo}
\end{figure}
\begin{thm}
For suitable values of $c$ and $\varepsilon$ Algorithm~\ref{alg:withoutRotations} computes a $\left(\frac{3}{\sqrt[3]{2}-1}\!\approx\! 11.542\right)$-approximation for the variant of three-dimensional \textsc{OMCOP} where $n$ axis parallel cuboids are packed under translation in $\mathcal{O}(n\log n)$ time.
\end{thm}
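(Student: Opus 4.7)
The plan is to express the container volume in terms of the auxiliary quantities $A_j := \sum_{b_i \in \mathcal{B}_j} w_i d_i$ and $D_j$ (the depth used by the 2D strip packing in step~\ref{step:pack}), and then to minimize the resulting ratio over the parameters $c$ and $\varepsilon$. The output container has dimensions $w_{\max} \times c\,d_{\max} \times H$ with $H = \sum_{j} k_j\,h_{\max}(1-\varepsilon)^{j-1}$, where $k_j$ is the number of pieces produced from the strip of class $\mathcal{B}_j$, so only $H$ and $k_j$ need careful bounding.

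Remark~\ref{rem:halfFilled} applied to each 2D strip (width $w_{\max}$, tallest rectangle $d_{\max}$) gives $2A_j \ge w_{\max}\max(D_j - d_{\max},0)$, and the chopping in steps~\ref{step:divide}--\ref{step:extend} gives $k_j \le 1 + \max(D_j - d_{\max},0)/[(c-1)d_{\max}] \le 1 + 2A_j/[(c-1)\,w_{\max}\,d_{\max}]$. Multiplying by the per-piece volume $w_{\max}\cdot c\,d_{\max}\cdot h_{\max}(1-\varepsilon)^{j-1}$ and summing over $j$,
\begin{equation*}
V \;\le\; c\,w_{\max}d_{\max}h_{\max}\sum_{j}(1-\varepsilon)^{j-1} \;+\; \frac{2c\,h_{\max}}{c-1}\sum_{j} A_j(1-\varepsilon)^{j-1}.
\end{equation*}
The first (geometric) sum is bounded by $1/\varepsilon$. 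For the second, every $b_i \in \mathcal{B}_j$ satisfies $h_i > h_{\max}(1-\varepsilon)^{j}$, so $V_{\text{opt}} \ge \sum_i h_i w_i d_i \ge h_{\max}(1-\varepsilon)\sum_j A_j(1-\varepsilon)^{j-1}$. Combined with the elementary bound $V_{\text{opt}} \ge w_{\max}d_{\max}h_{\max}$ (the optimal container must cover the largest box in each coordinate), these three inequalities give
\begin{equation*}
V/V_{\text{opt}} \;\le\; f(c,\varepsilon) \;:=\; \frac{c}{\varepsilon} + \frac{2c}{(c-1)(1-\varepsilon)}.
\end{equation*}

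It then remains to minimize $f$ over $c > 1$ and $\varepsilon \in (0,1)$. Fixing $c$, the first-order condition in $\varepsilon$ gives $\varepsilon^{*}(c) = \sqrt{c-1}/(\sqrt{2}+\sqrt{c-1})$ and reduces $f$ to $c(\sqrt{2}+\sqrt{c-1})^{2}/(c-1)$. Substituting $s = \sqrt{c-1}$ and zeroing the $s$-derivative leads to the cubic $s^{3} = \sqrt{2}$, so the optimum is $c^{*} = 1+\sqrt[3]{2}$, at which $f^{*} = (1+\sqrt[3]{2})^{3}$; using the identity $(\sqrt[3]{2}-1)(1+\sqrt[3]{2}+\sqrt[3]{4})=1$, this rewrites as $3/(\sqrt[3]{2}-1)\approx 11.542$, matching the claimed ratio. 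With $c,\varepsilon$ now constants, the runtime is dominated by the $\mathcal{O}(n\log n)$ cost of the 2D strip-packing calls in step~\ref{step:pack} (Remark~\ref{rem:runtime2Dpack}). I expect the main delicacy to lie in the bookkeeping of the additive overhead: the ``$+1$'' piece per class and the $d_{\max}$-deep tail of each strip must produce only the single clean term $c/\varepsilon$ on the right-hand side; a sloppier bound of the form $k_j \le 1 + D_j/[(c-1)d_{\max}]$ would pick up an extra factor $c/(c-1)$ on the overhead and shift the minimum away from the neat cubic value $c = 1+\sqrt[3]{2}$.
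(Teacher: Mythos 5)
Your proposal is correct and follows essentially the same route as the paper: bound the number of pieces per height class via the half-filled property of Algorithm~\ref{alg:3Appr2D}, charge the two resulting terms against $\sum_i V(b_i)\le V_{\text{opt}}$ and $w_{\max}d_{\max}h_{\max}\le V_{\text{opt}}$ respectively, and minimize $\frac{c}{\varepsilon}+\frac{2c}{(c-1)(1-\varepsilon)}$ to get $c=1+\sqrt[3]{2}$ and ratio $(1+\sqrt[3]{2})^3=\frac{3}{\sqrt[3]{2}-1}$. Your explicit two-stage optimization (eliminating $\varepsilon$ first, then solving $s^3=\sqrt{2}$) is in fact more detailed than the paper's, which only states the optimal $c$ and $\varepsilon$.
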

\begin{proof}
Let $D_j$ denote the depth of the strip obtained in step~\ref{step:pack} for the boxes in $\mathcal{B}_j$. Then we get by step~\ref{step:divide} $\left\lceil \frac{D_j - d_{\max}}{\left(c-1\right)d_{\max}}\right\rceil$ pieces. After step~\ref{step:extend} each piece has volume $c\cdot d_{\max} w_{\max} h_{\max}\left(1-\varepsilon\right)^{j -1}$. Consider the total volume $V_j$ of the pieces obtained for the subset $\mathcal{B}_j$:
\begin{align*}
V_j &= c \cdot d_{\max} \left\lceil \frac{D_j - d_{\max}}{\left(c-1\right)d_{\max}}\right\rceil w_{\max} h_{\max}\left(1-\varepsilon\right)^{j -1}\\
&< \frac{c}{c-1} \left(D_j-d_{\max}\right) w_{\max} h_{\max}\left(1-\varepsilon\right)^{j -1} + c \cdot d_{\max}w_{\max}h_{\max}\left(1-\varepsilon\right)^{j -1}.
\end{align*}
We know from the two-dimensional packing algorithm that the base area of the strip is half filled with boxes except for the last part of depth $d_{\max}$ (Remark~\ref{rem:halfFilled}), 
so $\left(D_j - d_{\max}\right)w_{\max} \leq 2 \sum_{b_i \in \mathcal{B}_j} A_B\left(b_i\right)$ 
where $A_B\left(b\right)$ denotes the base area of box $b$. We also know that for every $b_i \in \mathcal{B}_j$ the inequality $h_{\max}\left(1-\varepsilon\right)^{j -1} < \frac{h_i}{1-\varepsilon}$ holds. Therefore, we get for the total volume of the packing $V$ that
\begin{align}
V &\leq \sum_{j=1}^{\infty}\left(\frac{c}{c-1} \left(D_j-d_{\max}\right) w_{\max} h_{\max}\left(1-\varepsilon\right)^{j -1} + c \cdot d_{\max}w_{\max}h_{\max}\left(1-\varepsilon\right)^{j -1}\right)\nonumber\\
&\leq \sum_{j=1}^{\infty} \left(\frac{2c}{\left(1-\varepsilon\right)\left(c-1\right)} \sum_{b_i \in \mathcal{B}_j}V\left(b_i\right) + c \cdot w_{\max} \cdot d_{\max} \cdot h_{\max}\left(1-\varepsilon\right)^{j-1}\right)\nonumber\\
&\leq \frac{2c}{\left(1-\varepsilon\right)\left(c-1\right)} \underbrace{\sum_{b \in \mathcal{B}} V\left(b\right)}_{\leq V_{\text{opt}}} + 
c \cdot \underbrace{w_{\max} \cdot d_{\max} \cdot h_{\max}}_{\leq V_{\text{opt}}} \cdot \sum_{l=0}^{\infty}\left(1-\varepsilon\right)^l \label{eq:getAssympAppr}\\
&\leq \left(\frac{2c}{\left(1-\varepsilon\right)\left(c-1\right)}+ \frac{c}{\varepsilon}\right) V_{\text{opt}}\label{eq:ApproxPackAxisParBoxes}.
\end{align}
The factor before $V_{\text{opt}}$ in term~(\ref{eq:ApproxPackAxisParBoxes}) is minimized if the partial derivatives with respect to $c$ and $\varepsilon$ are 0. Solving the resulting system of equations we get $c = \sqrt[3]{2}+1$ and $\varepsilon = \frac{1}{3}\left(\sqrt[3]{4}-\sqrt[3]{2}+1\right)$. This gives an approximation ratio of $\frac{3}{\sqrt[3]{2}-1}$. 
\end{proof}

\subsection{Packing Cuboids under Rigid Motions}
\label{sec:PackCuboidRigidMotions}
Now we consider the variant of \textsc{OMCOP} where the objects to be packed are boxes and rigid motions are allowed. We basically use the algorithm stated above but with an extra preprocessing step, namely rotating every box $b_i \in \mathcal{B}$ such that it becomes axis parallel and $h_i \geq w_i \geq d_i$. This can be done in $\mathcal{O}(n)$ time. To prove the performance bound of this algorithm we need the following lemma.
\begin{lem}If every $b_i \in \mathcal{B}$ is oriented such that $h_i \geq w_i\geq d_i$, then \begin{equation*}
h_{\max}\cdot w_{\max} \cdot d_{\max} \leq \sqrt{6}\cdot V_{\text{opt}}.
\end{equation*}
\label{lem:AxisParContBoxes}
\end{lem}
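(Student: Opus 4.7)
The plan is to identify three boxes of $\mathcal{B}$ (not necessarily distinct) that realise the three maxima, then bound each maximum individually in terms of the optimal container's side lengths $H\ge W\ge D$ (so $V_{\text{opt}}=HWD$), and finally multiply. Pick $b_1,b_2,b_3\in\mathcal{B}$ with $h_1=h_{\max}$, $w_2=w_{\max}$, $d_3=d_{\max}$; each $b_j$ is contained, under some rigid motion, in the optimal container. The key observation is that because of the orientation convention $h_i\ge w_i\ge d_i$, box $b_j$ contains a canonical subobject: $b_1$ contains a segment of length $h_{\max}$, $b_2$ contains a $w_{\max}\times w_{\max}$ square (since $h_2\ge w_2=w_{\max}$), and $b_3$ contains a cube of side $d_{\max}$ (since $h_3\ge w_3\ge d_3$).

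The first and third bounds are immediate. From the segment I get that $h_{\max}$ cannot exceed the container's diagonal, so $h_{\max}\le\sqrt{H^2+W^2+D^2}\le\sqrt{3}\,H$. From the cube I use that for any orthonormal frame $\vec{u},\vec{v},\vec{r}$ and any unit direction $\vec{e}$ one has $|\vec{u}\!\cdot\!\vec{e}|+|\vec{v}\!\cdot\!\vec{e}|+|\vec{r}\!\cdot\!\vec{e}|\ge 1$ (the $L^1\ge L^2$ inequality applied to the unit vector $(\vec{u}\!\cdot\!\vec{e},\vec{v}\!\cdot\!\vec{e},\vec{r}\!\cdot\!\vec{e})$), so a cube of side $s$ has width at least $s$ in every direction; since the axis-parallel container has minimum width $D$, it follows that $D\ge d_{\max}$. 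The main work is the square bound. Letting $\vec{u},\vec{v}$ denote the two orthonormal in-plane vectors of the $w_{\max}$-square, its axis-parallel bounding box has side $w_{\max}(|u_i|+|v_i|)$ in coordinate direction $i$, which must fit inside the container; in particular $w_{\max}(|u_2|+|v_2|)\le W$ and $w_{\max}(|u_3|+|v_3|)\le D$. Using $(|u_i|+|v_i|)^2\ge u_i^2+v_i^2$ and summing the two squared inequalities gives
\[
w_{\max}^2\bigl((u_2^2+v_2^2)+(u_3^2+v_3^2)\bigr)\le W^2+D^2.
\]
Substituting $u_2^2+u_3^2=1-u_1^2$ and $v_2^2+v_3^2=1-v_1^2$, and noting that $u_1^2+v_1^2\le 1$---since the completed orthonormal frame with $\vec{n}=\vec{u}\times\vec{v}$ satisfies $u_1^2+v_1^2+n_1^2=1$---shows that the bracketed factor is at least $1$, yielding $w_{\max}\le\sqrt{W^2+D^2}\le\sqrt{2}\,W$.

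Multiplying the three bounds then finishes the argument: $h_{\max}w_{\max}d_{\max}\le\sqrt{3}\,H\cdot\sqrt{2}\,W\cdot D=\sqrt{6}\,HWD=\sqrt{6}\,V_{\text{opt}}$. The only real obstacle is the square bound: bounding $w_{\max}$ by the container diameter as for $h_{\max}$ is too weak, because it does not use the smallness of $D$ and so cannot close the argument for elongated containers; the projection/orthonormality computation above is precisely what exploits the two smaller container edges and makes the constant $\sqrt{6}$ come out cleanly.
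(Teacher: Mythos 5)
Your proof is correct and follows essentially the same strategy as the paper: charge $h_{\max}$, $w_{\max}$, $d_{\max}$ to three distinct container dimensions at costs $\sqrt{3}$, $\sqrt{2}$, $1$ respectively, using the orientation convention $h_i\ge w_i\ge d_i$ to guarantee the relevant inscribed objects. The only (immaterial) difference is that you use an inscribed square and cube with an orthonormal-frame computation where the paper uses an inscribed disk and ball and argues directly about projections of a diametric chord.
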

\begin{proof}
Since an optimal container has to contain the box determining $h_{\max}$, it contains a line segment of length $h_{\max}$. The projection of that line segment on at least one of the axes has to have length at least $\frac{1}{\sqrt{3}}h_{\max}$. W.l.o.g. let this axis be the x-axis. Therefore, the optimal container has an expansion of at least $\frac{1}{\sqrt{3}}h_{\max}$ in x-direction.

Since every box is higher then wide, a box with width $w_{\max}$ contains a disk $D$ with diameter $w_{\max}$ and so the optimal container does. Observe that $D$ contains a diametric line segment $l$ which is parallel to the y-z-plane. Consequently, the projection of $l$ and therefore the one of the whole box on the y-axis or on the z-axis has a length of at least $\frac{1}{\sqrt{2}}w_{\max}$. W.l.o.g. let this be the y-axis.

A box with depth $d_{\max}$ contains a sphere with diameter $d_{\max}$. The projection of this sphere on any axis has length at least $d_{\max}$.

Summarizing, each optimal box has volume at least $\frac{1}{\sqrt{6}}h_{\max}\cdot w_{\max} \cdot d_{\max}$
\end{proof}

Observe that every argument leading to inequality~(\ref{eq:getAssympAppr}) still holds for this variant of the algorithm. Using Lemma~\ref{lem:AxisParContBoxes} to estimate $h_{\max}\cdot w_{\max} \cdot d_{\max}$ we get an approximation factor of
\begin{equation*}
\frac{2c}{\left(1-\varepsilon\right)\left(c-1\right)} + \frac{c \cdot \sqrt{6}}{\varepsilon}.
\end{equation*}
Minimizing this expression as before yields the following theorem.
\begin{thm}
The given algorithm computes a $17.738$-approximation for the variant of three-dimensional \textsc{OMCOP} where $n$ axis parallel cuboids are packed under rigid motions in $\mathcal{O}(n\log n)$ time.
\end{thm}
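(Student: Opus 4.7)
The plan is to use almost the same argument as in the previous subsection, with two small but essential modifications: (i) the preprocessing step is free in runtime but crucial because it ensures the hypothesis $h_i\geq w_i\geq d_i$ required by Lemma~\ref{lem:AxisParContBoxes}; (ii) the trivial bound $w_{\max}\cdot d_{\max}\cdot h_{\max}\leq V_{\text{opt}}$, which was used in the translation-only proof, is no longer available because rotations let an optimal container hide $h_{\max}$ along a diagonal, and must be replaced by Lemma~\ref{lem:AxisParContBoxes}. The preprocessing (for each box sort its three side lengths and rotate accordingly) takes $\mathcal{O}(n)$ time, so the total runtime of the modified algorithm is still $\mathcal{O}(n\log n)$ by the same accounting as in the previous subsection.

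Next I would observe, as the excerpt already notes, that \emph{every} step in the derivation of inequality~(\ref{eq:getAssympAppr}) uses only the properties of Algorithm~\ref{alg:withoutRotations} applied to an axis-aligned box family: the partition into height classes, the two-dimensional strip packing of Algorithm~\ref{alg:3Appr2D}, and the cut-and-stack step. After preprocessing, the input to Algorithm~\ref{alg:withoutRotations} is axis-aligned, so inequality~(\ref{eq:getAssympAppr}) applies verbatim and gives
\begin{equation*}
V\;\leq\;\frac{2c}{(1-\varepsilon)(c-1)}\sum_{b\in\mathcal{B}}V(b)\;+\;c\cdot w_{\max}\cdot d_{\max}\cdot h_{\max}\cdot\sum_{l=0}^{\infty}(1-\varepsilon)^l.
\end{equation*}

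Now I bound the two remaining quantities against $V_{\text{opt}}$. The first is immediate: rigid motions preserve volume and the optimal packing is non-overlapping, so $\sum_{b\in\mathcal{B}}V(b)\leq V_{\text{opt}}$. The second is exactly where the new difficulty concentrates, and it is resolved by Lemma~\ref{lem:AxisParContBoxes}, whose hypothesis $h_i\geq w_i\geq d_i$ has been enforced by the preprocessing step: $h_{\max}\cdot w_{\max}\cdot d_{\max}\leq\sqrt{6}\,V_{\text{opt}}$. Substituting and evaluating the geometric series yields the stated expression
\begin{equation*}
V\;\leq\;\left(\frac{2c}{(1-\varepsilon)(c-1)}\;+\;\frac{c\sqrt{6}}{\varepsilon}\right)V_{\text{opt}}.
\end{equation*}

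Finally, I would minimize this two-variable expression over $c>1$ and $\varepsilon\in(0,1)$ by setting both partial derivatives to zero, exactly as in the translation-only proof; the only difference is the extra factor $\sqrt{6}$ in the second term, which changes the optimal values of $c$ and $\varepsilon$ but not the structure of the computation. Plugging the resulting critical point back into the formula gives a numerical value $\leq 17.738$, which is the claimed approximation ratio. The main (mild) obstacle here is purely algebraic—solving the $2\times 2$ stationarity system is messier than in the $\sqrt{6}=1$ case and I would expect to carry it out symbolically and then evaluate numerically rather than looking for a clean closed form, since the statement already commits only to a decimal bound.
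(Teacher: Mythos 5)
Your proposal is correct and follows essentially the same route as the paper: preprocess each box into the orientation $h_i\geq w_i\geq d_i$, reuse inequality~(\ref{eq:getAssympAppr}) verbatim, replace the trivial bound on $h_{\max}\cdot w_{\max}\cdot d_{\max}$ by Lemma~\ref{lem:AxisParContBoxes}, and minimize $\frac{2c}{(1-\varepsilon)(c-1)}+\frac{c\sqrt{6}}{\varepsilon}$ over $c$ and $\varepsilon$. No gaps.
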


\subsubsection*{Convex Container}
If we allow a convex container instead of an orthogonal container, we can use the same algorithm but adapt the analysis. The arguments leading to inequality~(\ref{eq:getAssympAppr}) still hold since they only use the total volume of the boxes as estimate for the volume of an optimal container. To estimate $h_{\max}\cdot w_{\max} \cdot d_{\max}$, we use the following lemma. Note that $V_{\text{opt}}$ here denotes the volume of a minimal convex container instead of an axis parallel container.
\begin{lem}
If every $b_i \in \mathcal{B}$ is oriented such that $h_i \geq w_i\geq d_i$, then \begin{equation*}
h_{\max}\cdot w_{\max} \cdot d_{\max} \leq 6 \cdot V_{\text{opt}}.
\end{equation*}
\end{lem}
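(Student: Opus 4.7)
The plan is to mirror the structure of the preceding lemma, since the geometric objects that need to be placed inside the container do not change, but to replace the final volume estimate by one that works for a convex rather than axis-parallel container. This is what causes the constant to degrade from $\sqrt{6}$ to $6=3!$.

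First, from the three boxes of $\mathcal{B}$ realizing $h_{\max}$, $w_{\max}$, and $d_{\max}$ I would extract exactly the same three objects as in the previous lemma. The tallest box contains a line segment $s$ of length $h_{\max}$. The widest box has both its height and its width at least $w_{\max}$, so it contains a planar disk $D$ of diameter $w_{\max}$. The deepest box has all three dimensions at least $d_{\max}$, so it contains a full ball $B$ of diameter $d_{\max}$. Because the container is convex, it contains $\mathrm{conv}(s\cup D\cup B)$.

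Next, rather than projecting $s$, $D$, $B$ onto the coordinate axes (which in the previous lemma cost the factors $1/\sqrt{3}$ and $1/\sqrt{2}$), I would choose a bespoke orthonormal frame $u_1,u_2,u_3$ adapted to the three objects: let $u_1$ point along $s$; let $u_2$ be a unit vector orthogonal to $u_1$ that lies in the plane of $D$ (the intersection of that plane with $u_1^{\perp}$ is at least a line, so such a direction exists); and set $u_3=u_1\times u_2$. In this frame the container contains three pairwise orthogonal segments of lengths exactly $h_{\max}$, $w_{\max}$, $d_{\max}$, namely $s$ itself in direction $u_1$, a diameter of $D$ in direction $u_2$ (every chord of $D$ through its centre in its own plane has length $w_{\max}$), and a diameter of $B$ in direction $u_3$ (a ball realizes its diameter in every direction).

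Finally, I would invoke the geometric fact that the convex hull of three pairwise orthogonal segments of lengths $a$, $b$, $c$ in $\mathbb{R}^{3}$ has volume at least $abc/6$. The extremal configurations are the tetrahedron with vertices $0, a u_1, b u_2, c u_3$ (when the segments share a common endpoint) and the octahedron with vertices $\pm\frac{a}{2}u_1, \pm\frac{b}{2}u_2, \pm\frac{c}{2}u_3$ (when they share a common midpoint); both have volume exactly $abc/6$, and moving any of the three segments off a concurrent configuration only enlarges the hull. Combining this with the containment from the first step yields $V_{\mathrm{opt}}\ge h_{\max}w_{\max}d_{\max}/6$, which is the claimed inequality.

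The main obstacle is this last step, the volume bound for three orthogonal segments. I would establish it either by first reducing to unit-length orthonormal segments via an affine rescaling and then using a short symmetry/perturbation argument at the concurrent configuration, or by projecting onto the plane spanned by two of the three directions to reduce to the two-dimensional analogue---two perpendicular segments of lengths $a,b$ have convex hull of area at least $ab/2$---and combining with a Brunn-type slicing argument in the third direction. Either route is elementary but not a one-liner.
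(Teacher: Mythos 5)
Your proposal follows essentially the same route as the paper: extract the segment of length $h_{\max}$, a perpendicular diameter of the disk of diameter $w_{\max}$, and a further perpendicular diameter of the ball of diameter $d_{\max}$, then apply the fact that the convex hull of three pairwise orthogonal segments of lengths $a,b,c$ has volume at least $\frac{1}{6}abc$ (which the paper simply cites as Lemma~6 of~\cite{convexRegions} rather than reproving). The argument is correct and matches the paper's proof.
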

\begin{proof}
Consider the line segment, disk and sphere from the proof of Lemma~\ref{lem:AxisParContBoxes}. The line segment has length $h_{\max}$. The disk with diameter $w_{\max}$ contains a line segment of length $w_{\max}$ that is perpendicular to the first line segment. The sphere with diameter $d_{\max}$ contains a line segment of length $d_{\max}$ that is perpendicular to the first two line segments. It is well known (see, e.g., Lemma 6 from~\cite{convexRegions}) that the convex hull of these three line segments has a volume of at least $\frac{1}{6}h_{\max} w_{\max} d_{\max}$.
\end{proof}
This leads with inequality~(\ref{eq:getAssympAppr}) to the following approximation ratio:
\begin{equation*}
\frac{2c}{\left(1-\varepsilon\right)\left(c-1\right)} + \frac{c \cdot 6}{\varepsilon}.
\end{equation*}
Minimizing this term as before yields the following theorem.
\begin{thm}
Using the algorithm described in section~\ref{sec:PackCuboidRigidMotions} we get a $29.135$\hyp approximation for packing $n$ axis parallel boxes under rigid motions into a smallest-volume convex container in time $\mathcal{O}(n \log n)$.
\end{thm}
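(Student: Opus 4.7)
The plan is to reuse without modification the algorithm of Section~\ref{sec:PackCuboidRigidMotions} (pre-rotate each $b_i$ so that $h_i \geq w_i \geq d_i$ and then run Algorithm~\ref{alg:withoutRotations}), and only update the analysis to reflect that the benchmark $V_{\text{opt}}$ now denotes the volume of a minimum-volume \emph{convex} container rather than of a minimum axis-parallel box.

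First, I would observe that the chain of inequalities leading up to~(\ref{eq:getAssympAppr}) uses $V_{\text{opt}}$ only through the bound $\sum_{b \in \mathcal{B}} V(b) \leq V_{\text{opt}}$, which holds for any container that packs the boxes without overlap, and therefore carries over verbatim to the convex setting. Second, I would invoke the preceding lemma, $h_{\max}\cdot w_{\max}\cdot d_{\max} \leq 6\, V_{\text{opt}}$, in place of the $\sqrt{6}$ bound used for the axis-parallel case. Substituting this into~(\ref{eq:getAssympAppr}) yields
\[
V \leq \left(\frac{2c}{(1-\varepsilon)(c-1)} + \frac{6c}{\varepsilon}\right) V_{\text{opt}}.
\]

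Finally I would minimize this expression over $c > 1$ and $\varepsilon \in (0,1)$ by setting both partial derivatives to zero, exactly as in the proofs of the two previous approximation constants of this section. A routine two-variable calculation (or direct numerical evaluation) gives an optimum value of approximately $29.135$, which is the claimed ratio. The $\mathcal{O}(n \log n)$ running time is inherited from Section~\ref{sec:PackCuboidRigidMotions} without change, since the algorithm itself is not modified.

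There is essentially no obstacle here: the preceding lemma has already absorbed all the geometry that is specific to convex containers, and the remaining volume accounting is identical to the axis-parallel version up to the single replacement $\sqrt{6} \to 6$. The only mildly tedious part is the numerical minimization, but it is mechanical and fully analogous to the minimizations carried out earlier in this section.
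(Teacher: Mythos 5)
Your proposal matches the paper's proof essentially verbatim: the paper likewise reuses the algorithm unchanged, notes that the derivation of inequality~(\ref{eq:getAssympAppr}) depends on $V_{\text{opt}}$ only through the total-volume bound, substitutes the lemma $h_{\max}\cdot w_{\max}\cdot d_{\max} \leq 6\,V_{\text{opt}}$ for the $\sqrt{6}$ bound, and minimizes $\frac{2c}{(1-\varepsilon)(c-1)} + \frac{6c}{\varepsilon}$ over $c$ and $\varepsilon$ to obtain $29.135$. The approach and all key steps are the same; no issues.
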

	
\subsection{Packing Convex Polyhedra under Rigid Motions}
\label{sec:ConvPolyOMCOP}
We use the algorithm from the previous sections to pack convex polyhedra under rigid motions into an axis-parallel box of minimal volume. To do so, we add another preprocessing step where we compute a bounding box for every polyhedron according to the following lemma. We then pack these boxes with the algorithm discussed in the previous section.
\begin{lem}
There is a box $B$ for every convex polyhedron $K$ in $\mathbf{R}^n$ that contains $K$ such that
\begin{equation*}
V(B) \leq n! V(K).
\end{equation*}
$B$ can be computed in $\mathcal{O}(n \cdot m^2)$ time, where $m$ is the number of vertices of $K$.
\label{lem:LemmaVolumeEnclosingBox}
\end{lem}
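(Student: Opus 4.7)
The plan is to pick $n$ mutually orthogonal directions in which $K$ has large extents, take $B$ to be the box aligned with these directions whose side lengths are exactly those extents, and then reduce the volume bound to an $n$-dimensional analog of the convex-hull estimate already invoked earlier in the paper (Lemma~6 of~\cite{convexRegions}).

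First I would construct the frame inductively. Let $e_1$ be a unit vector realizing the diameter $d_1$ of $K$; having chosen $e_1,\ldots,e_{k-1}$, let $e_k$ be a unit vector in $W_k:=(\operatorname{span}(e_1,\ldots,e_{k-1}))^{\perp}$ realizing the maximum extent $d_k$ of $K$ in any direction of $W_k$. Take $B$ to be the box aligned with the frame $e_1,\ldots,e_n$ and of side lengths $d_1,\ldots,d_n$, translated so that it contains $K$; this is possible because, by definition of $d_i$, the width of $K$ along $e_i$ is exactly $d_i$. Thus $V(B)=d_1\cdots d_n$, and by construction $K$ contains pairwise orthogonal segments $s_1,\ldots,s_n$ with $s_i$ parallel to $e_i$ and $|s_i|=d_i$.

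The main step is to show $V(K)\ge d_1\cdots d_n/n!$. Since $K$ is convex, $K\supseteq\operatorname{conv}(s_1\cup\cdots\cup s_n)$, so it suffices to prove the following $n$-dimensional generalization of Lemma~6 from~\cite{convexRegions}: the convex hull of $n$ pairwise orthogonal segments of lengths $d_1,\ldots,d_n$ has $n$-volume at least $d_1\cdots d_n/n!$. I would prove this by induction on $n$, slicing the convex hull by hyperplanes orthogonal to $s_n$ and combining Cavalieri's principle with the inductive hypothesis applied to the projected segments $s_1,\ldots,s_{n-1}$. I expect this inductive lemma to be the main obstacle, because the segments need not share a common point and one has to control how the $(n-1)$-dimensional slices shrink as one moves toward the endpoints of $s_n$; the quantitative check is that the extremal configurations (all segments through a common endpoint, or through a common midpoint as in a cross-polytope) both yield the tight constant $n!$.

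Finally, to turn this into an algorithm, observe that the extent of $K$ in any unit direction $u$ is attained between two vertices of $K$, and for a subspace $W\subseteq\mathbb{R}^n$ the maximum extent of $K$ over all directions in $W$ equals $\max_{i,j}\|P_W(v_i-v_j)\|$, where $P_W$ is the orthogonal projection onto $W$ and the maximum ranges over all pairs of vertices $v_i,v_j$ of $K$; the optimal direction is then the corresponding normalized projected difference. Hence each of the $n$ iterations reduces to a brute-force scan of the $O(m^2)$ vertex pairs, yielding the stated overall running time of $\mathcal{O}(n\cdot m^2)$.
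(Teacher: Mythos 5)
Your construction of the box itself is essentially the paper's: choosing $e_1$ along the diameter and then $e_k$ as the best direction in the orthogonal complement is exactly what the paper does by repeatedly projecting onto the hyperplane perpendicular to the current diameter, and your $O(m^2)$-pairs-per-level algorithm matches theirs. The volume bound, however, rests on a false claim: you assert that ``by construction $K$ contains pairwise orthogonal segments $s_1,\ldots,s_n$ with $s_i$ parallel to $e_i$ and $|s_i|=d_i$.'' Only $s_1$ exists ($d_1$ is realized by an actual chord of $K$). For $k\ge 2$, $d_k$ is the \emph{width} of $K$ in direction $e_k$, i.e.\ the length of the projection of $K$ onto the $e_k$-axis; this is witnessed by two points of $K$ whose difference has $e_k$-component $d_k$, but the chord joining them is generally not parallel to $e_k$, and $K$ need not contain any segment parallel to $e_k$ of length close to $d_k$. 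Already in the plane a flat triangle such as $(0,0)$, $(1,0.05)$, $(0.5,-0.05)$ has vertical width $0.1$ but contains no vertical segment of length $0.1$. The paper itself is implicit evidence against your claim: Lemmas~\ref{lem:TriangleHeightw} and~\ref{lem:projectionLengthd} exist precisely because only segments of length $w/\sqrt{5}$ and $d/(8\sqrt{3})$ --- not $w$ and $d$ --- can be guaranteed inside the polyhedron, and the large final constants stem from exactly this loss. So the reduction to a convex hull of orthogonal segments inside $K$ cannot work as stated.

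The paper's proof avoids this entirely: it inducts on the dimension, encloses the projection $K'$ of $K$ onto the hyperplane $\pi_P$ orthogonal to the diameter in a box $B'$ with $V'(B')\le (n-1)!\,V'(K')$, and then uses the single inequality $V(K)\ge \frac{1}{n}\, l\cdot V'(K')$, which needs only the one genuine segment $\overline{PQ}$ of length $l$ perpendicular to $\pi_P$ that $K$ really does contain. If you want to repair your argument, that projection inequality (a cone-volume bound) is the ingredient to substitute for your orthogonal-segment hull; note also that your proposed $n$-dimensional generalization of Lemma~6 of~\cite{convexRegions} for segments not sharing a common point is left unproved and is itself nontrivial, so even granting the segments you would still have a second gap to close.
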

\begin{proof}[Proof by induction over $n$]
In two dimensions, the minimal enclosing rectangle has at most twice the area of the contained polygon.

In higher dimensions $n$, let $P,Q$ be two points of $K$ with maximum distance and $\vert PQ\vert = l$. Let $\pi_p$ be the hyperplane normal to $\overline{PQ}$ in the point $P$. Let $K'$ be the orthogonal projection of $K$ onto $\pi_p$. By the inductive hypothesis there is a ($n-1$)-dimensional box $B'$ containing $K'$ for which
\begin{equation*}
V'(B') \leq (n-1)!V'(K')
\end{equation*}
where $V'$ denotes the ($n-1$)-dimensional volume. Then $K$ is contained in the box $B$ with base $B'$ and height $l$.
\begin{align*}
V(B) = l V'(B') \leq l(n-1)!V'(K') 
\end{align*}
It is well known (see e.g.~\cite{convexRegions}) that for any convex body $K$, its projection $K'$ on some hyperplane $\pi_P$, and a line segment $l$ perpendicular to $\pi_P$, it holds: $V\left(K\right) \geq \frac{1}{n}\cdot l \cdot V'\left(K'\right)$. Hence, we get for the volume of $B$:
\begin{equation*}
V(B) \leq n! V(K)
\end{equation*}
$B$ can be computed by testing every pair of vertices to find $P$ and $Q$ that have maximal distance. This takes $\mathcal{O}(m^2)$ time. Then $K$ gets projected on a plane perpendicular to $\overline{PQ}$. This is possible in $\mathcal{O}(m)$ time. Then we proceed recursively with the projection of $K$. In total we need $\mathcal{O}(n\cdot m^2)$ time.
\end{proof}
The construction in the proof of Lemma~\ref{lem:LemmaVolumeEnclosingBox} is the same as in Lemma~7 from~\cite{convexRegions}. We get a total running time of $\mathcal{O}\left(m^2\right)$ for computing the bounding boxes of three-dimensional polyhedra with $m$ vertices in total since the dimension is fixed.

For the analysis of the algorithm presented in this section we need several notations and lemmata that follow. Consider the box $b =(h,w,d)$ with $h \geq w \geq d$ obtained from the polyhedron $p$ by Lemma~\ref{lem:LemmaVolumeEnclosingBox}. Notice that in every facet of $b$ lies at least one point contained in $p$. We call the top and bottom one $T$ and $B$. In the left and right facet of $b$, we choose a point each and call them $L$ and $R$. By construction, the distance from them to the front facet has to be the same. We do the same for the front and rear facet and call them $F$ and $D$ respectively. We know from the construction that $\vert TB \vert = h$ and $\overline{TB}$ is parallel to the longest edge of $b$. If we project the polyhedron onto a plane perpendicular to $\overline{TB}$, we call the images of $T$, $L$, $R$, $F$ and $D$ under the projection $T'$, $L'$, $R'$, $F'$ and $D'$, respectively. See Figure~\ref{fig:BoxWithProjection} for illustration. Due to the construction of $b$, $\vert L'R' \vert = w$ holds.

\begin{figure}{ht}
\centering
\includegraphics[height=0.35\textwidth]{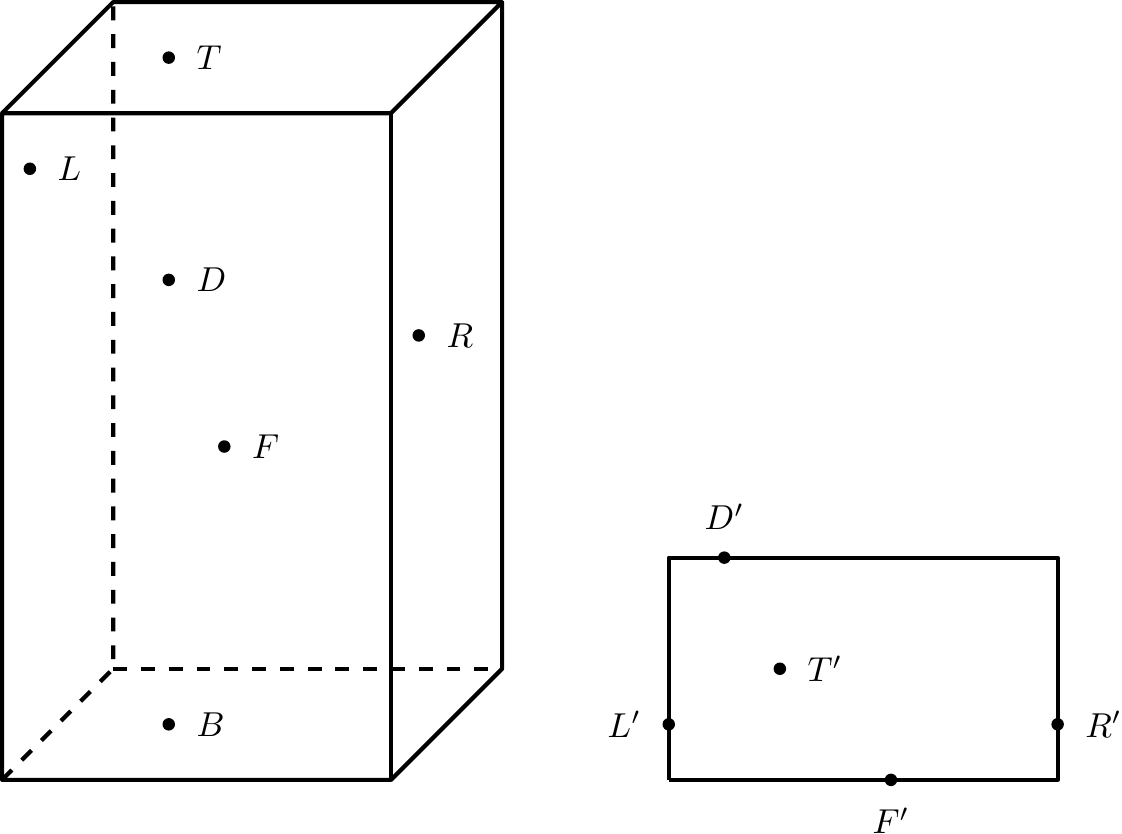}
\caption{Box with a point of the enclosed polyhedron in every facet and the projection of the box on a plane perpendicular to $\overline{TB}$. By construction, the images of $T$ and $B$ under the projection are the same.}

\label{fig:BoxWithProjection}
\end{figure}
\begin{lem}
Let $b =(h,w,d)$ with $h \geq w \geq d$ be the enclosing box obtained for polyhedron $p$. Then, parallel to any given plane, $p$ contains a line segment of length at least $w\cdot \frac{1}{\sqrt{5}}$.
\label{lem:TriangleHeightw}
\end{lem}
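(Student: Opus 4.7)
The plan is a case distinction on the direction of the normal $\vec{n}=(n_x,n_y,n_z)$ to the plane $\pi$, exploiting that by convexity $p$ contains both distinguished segments $\overline{LR}$ (of length at least $w$, lying in the plane $y=y_2$ because $y_L=y_R$) and $\overline{TB}$ (vertical, of length $h\ge w$), as well as the whole tetrahedron $Q=\mathrm{conv}(T,B,L,R)$ that they span. Two boundary cases dispose of themselves: if $n_xw+n_z(z_R-z_L)=0$, then $\overline{LR}$ is perpendicular to $\vec{n}$ and is itself a chord parallel to $\pi$ of length at least $w$; if $n_z=0$, then $\overline{TB}$ is perpendicular to $\vec{n}$ and gives a chord of length $h\ge w$. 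Both already exceed $w/\sqrt{5}$.

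In the generic case both $\alpha:=\vec{n}\cdot(R-L)$ and $n_z$ are nonzero. I would look for a chord inside the slice of $Q$ by a plane perpendicular to $\vec{n}$ at a carefully chosen value $t^*$. Setting $t^*=\vec{n}\cdot L$ gives a slice containing $L$ together with, provided $t^*\in[\vec{n}\cdot B,\vec{n}\cdot T]$, a matching point $P$ on $\overline{TB}$. A direct computation using $y_L=y_R=y_2$, $x_T=x_B=x_1$, $y_T=y_B=y_1$ and $\|\vec{n}\|^2=1$ rewrites $|L-P|^2$ as $(u^2+v^2-(n_yu-n_xv)^2)/n_z^2$ with $u=x_1$ and $v=y_1-y_2$; Cauchy--Schwarz then yields $|L-P|\ge\sqrt{u^2+v^2}\ge x_1$. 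The symmetric choice $t^*=\vec{n}\cdot R$ produces a chord of length at least $w-x_1$, so whenever at least one of $\vec{n}\cdot L,\vec{n}\cdot R$ falls in the $\vec{n}$-range of $\overline{TB}$, the maximum of the two is at least $w/2$, comfortably beating $w/\sqrt{5}$.

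The main obstacle is the remaining subcase in which both $\vec{n}\cdot L$ and $\vec{n}\cdot R$ lie outside $[\vec{n}\cdot B,\vec{n}\cdot T]$, so that the $\vec{n}$-ranges of $\overline{LR}$ and $\overline{TB}$ are separated at both endpoints of $\overline{LR}$. There one can no longer pair a vertex of $\overline{LR}$ with a point of $\overline{TB}$ directly; the chord must instead be extracted from interior points of the remaining four edges $\overline{TL},\overline{TR},\overline{BL},\overline{BR}$ of $Q$, equivalently from a bimedian connecting midpoints of opposite edges. Optimizing the chord length over this exceptional regime, where $\overline{LR}$ is strongly tilted with respect to the vertical and the plane $\pi$ is aligned unfavorably with both distinguished edges, is what forces the weaker constant $1/\sqrt{5}=1/\sqrt{1+4}$ in the stated bound.
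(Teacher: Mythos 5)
Your argument is complete and correct only up to the final subcase, and that final subcase is precisely where the content of the lemma lies: you describe it as ``the main obstacle'', assert that the chord must come from the other four edges of the tetrahedron or a bimedian, and state that optimizing over this regime ``forces'' the constant $1/\sqrt{5}$ --- but you never produce the estimate. Every case you actually handle (the two perpendicularity cases and the case where $\vec{n}\cdot L$ or $\vec{n}\cdot R$ lies in the $\vec{n}$-range of $\overline{TB}$) yields a chord of length at least $w/2>w/\sqrt{5}$, so the bound $w/\sqrt{5}$ can only be forced by the case you leave open. As written, the proof has a genuine gap exactly there.

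The paper closes this case with a short, uniform argument you could adopt. Take the one of $L,R$ whose distance to the line $TB$ is at least $w/2$ (say $L$) and work inside the single triangle $\triangle(T,B,L)\subseteq p$. For any plane, the parallel plane through the vertex of the triangle whose $\vec{n}$-value is the middle one of the three cuts the triangle in a cevian from that vertex to the opposite side, and a cevian is at least as long as the altitude from that vertex; so it suffices to bound all three altitudes $a_t,a_b,a_l$ from below. One has $a_l\ge w/2$ by the choice of $L$, and since $T$ and $B$ are extreme points of $p$ in the $z$-direction, the base angles at $T$ and $B$ are at most $90^\circ$, so $a_b$ (and symmetrically $a_t$) is minimized when the angle at $B$ is right, giving
\[
a_b=\frac{a_l\, h}{\sqrt{h^2+a_l^2}}=\frac{1}{\sqrt{a_l^{-2}+h^{-2}}}\ \ge\ \frac{1}{\sqrt{4w^{-2}+w^{-2}}}=\frac{w}{\sqrt{5}},
\]
using $a_l\ge w/2$ and $h\ge w$. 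This planar altitude computation, not a tetrahedron/bimedian optimization, is where $\sqrt{5}=\sqrt{4+1}$ comes from; in your framework it is exactly the case where the cutting plane must be anchored at $T$ or $B$ rather than at $L$ or $R$, and the resulting chord is controlled by $a_t$ or $a_b$.
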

\begin{proof}
Consider the points $T$, $B$, $L$ and $R$ as described above. The distance between line segment $\overline{TB}$ and $L$ or the distance between line segment $\overline{TB}$ and $R$ is at least $\frac{w}{2}$. Let w.l.o.g. $L$ be the point with largest distance to $\overline{TB}$. Consider the triangle $\bigtriangleup(T,B,L)$ with labeled edges and angles according to Figure~\ref{fig:TBL}. Notice that $\alpha \leq 90\degree$ and $\beta \leq 90\degree$. Let $a_t$ be the height of the triangle on edge $t$, $a_b$ on edge $b$ and $a_l$ on edge $l$. 

\begin{figure}[ht]
    \centering
    \begin{subfigure}[b]{0.47\textwidth}
        \includegraphics[width=\textwidth]{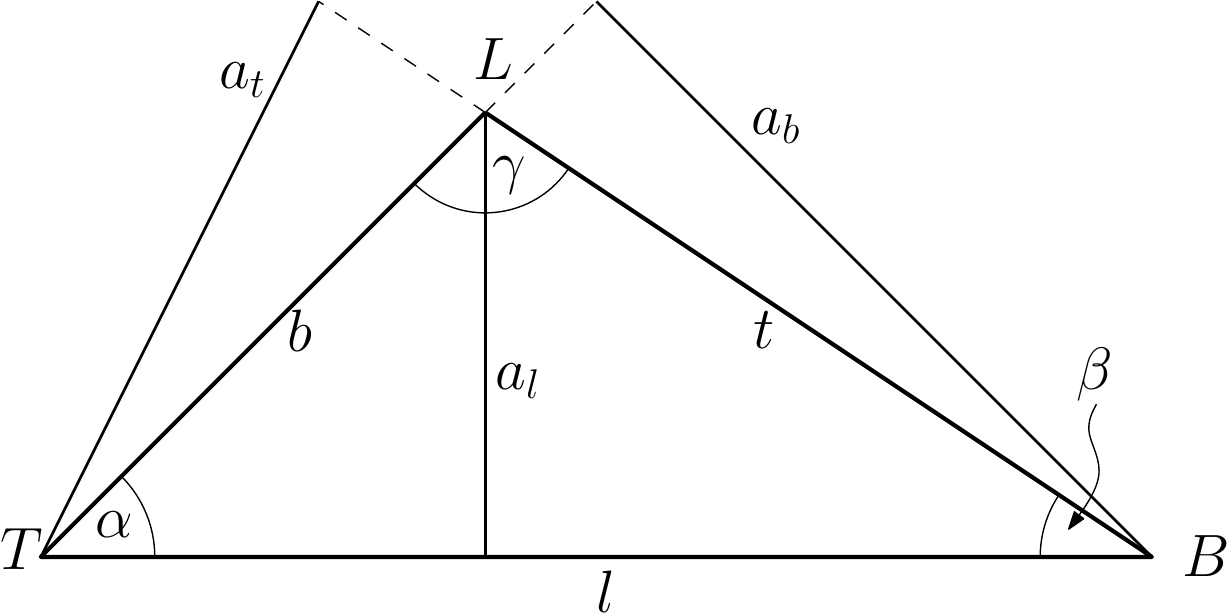}
        \caption{Labelled triangle $\bigtriangleup(T,B,L)$}
        \label{fig:TBL}
    \end{subfigure}
      \qquad
    \begin{subfigure}[b]{0.43\textwidth}
        \includegraphics[width=\textwidth]{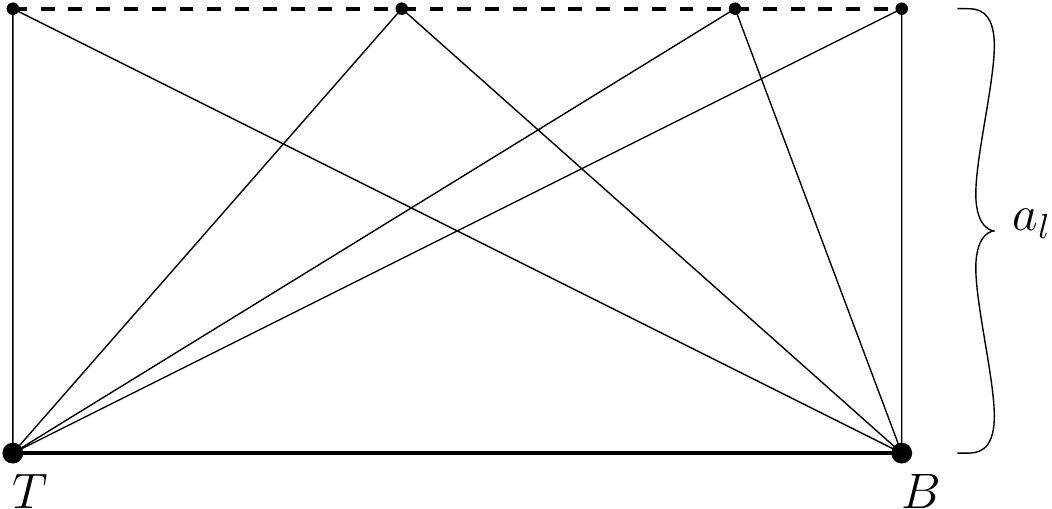}
        \caption{Possible triangles $\bigtriangleup(T,B,L)$}
        \label{fig:possibleTBL}
    \end{subfigure}
    
    \caption{}
    
\end{figure}

Due to the construction of $\bigtriangleup(T,B,L)$, we know that $a_l \geq \frac{w}{2}$. We will later show that $a_b \geq \frac{w}{\sqrt{5}}$ and $a_t \geq \frac{w}{\sqrt{5}}$. If we choose a plane parallel to the given one, such that the intersection between the plane and $\bigtriangleup(T,B,L)$ contains $T$, $B$ or $L$ but is not only one point, then we know that the intersection is at least a line segment with length $\min\left(a_t, a_b,a_l\right) \geq \frac{w}{\sqrt{5}}$ which completes the proof. It remains to show that $a_t, a_b \geq \frac{w}{\sqrt{5}}$.

We only show that $ a_b \geq \frac{w}{\sqrt{5}}$ since the proof for $a_t$ is analogous. Figure~\ref{fig:possibleTBL} depicts possible triangles with given distance $\vert TB\vert$ and height $a_l$. $a_b$ is the distance between $B$ and the line defined by $T$ and $L$. Since $\beta \leq 90\degree$ this distance is minimal for
$\beta = 90\degree$. 

Let $A$ be the area of $\bigtriangleup(T,B,L)$ with $\beta = 90\degree$.
\begin{align*}
&&\frac{a_l \cdot \vert TB \vert}{2} = &A = \frac{a_b \cdot \vert TL\vert}{2}\\
&\text{hence }& a_l \cdot h &= a_b \cdot \sqrt{h^2 + a_l^2} &&\text{, since } \vert TB\vert = h\\ &&&&&\text{ and using Pythagoras's theorem for }\vert TL \vert\\
&\text{i.e. }& a_b &= \frac{a_l \cdot h}{\sqrt{h^2 + a_l^2}}\\
&&&= \frac{1}{\sqrt{\frac{1}{a_l^2}+ \frac{1}{h^2}}}\\
&&&\geq \frac{1}{\sqrt{\frac{4}{w^2}+ \frac{1}{w^2}}}\\
&&&= \frac{w}{\sqrt{5}}
\end{align*}
\end{proof}

\begin{lem}
Let $b =(h,w,d)$ with $h \geq w \geq d$ be the enclosing box obtained for a polyhedron $p$. Then the projection of $p$ onto an arbitrary line $g$ has length at least $\frac{1}{8\sqrt{3}}d$.
\label{lem:projectionLengthd}
\end{lem}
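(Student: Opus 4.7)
The plan is to exploit convexity: $p$ contains the three chords $\overline{TB}$, $\overline{LR}$, and $\overline{FD}$, so the projection of $p$ onto any line $g$ with unit direction $u=(u_1,u_2,u_3)$ is at least as long as the longest of the three chord-projections. It therefore suffices to show that for every such $u$ at least one of these three projections has length $\geq d/(8\sqrt{3})$.

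First I would extract the coordinate constraints on the six boundary points, working in the natural axes of $b$. Because $|TB|=h$ equals the diameter of $p$ and $T,B$ lie on the top and bottom facets, the vector $T-B$ must equal $(h,0,0)$. Because the 2D construction in the plane $\pi_P$ is again built diameter-first (with the two sides perpendicular to $\overline{L'R'}$ passing through $L'$ and $R'$), the preimages $L$ and $R$ share the same $x_3$-coordinate, so $R-L=(\alpha,w,0)$ with $|\alpha|\leq h$. For $F$ and $D$ only the facet condition applies, giving $F-D=(\beta,\gamma,d)$ with $|\beta|\leq h$ and $|\gamma|\leq w$.

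Consequently the three chord-projections onto $u$ have lengths $h|u_1|$, at least $w|u_2|-h|u_1|$, and at least $d|u_3|-h|u_1|-w|u_2|$, respectively. Suppose for contradiction that all three are smaller than $\epsilon:=d/(8\sqrt{3})$. The first inequality gives $|u_1|<\epsilon/h$; chained into the second it gives $|u_2|<2\epsilon/w$; chained into the third it gives $|u_3|<4\epsilon/d$. Using $h\geq w\geq d$ we then obtain
\begin{equation*}
1 \;=\; u_1^2+u_2^2+u_3^2 \;<\; \frac{\epsilon^2}{h^2}+\frac{4\epsilon^2}{w^2}+\frac{16\epsilon^2}{d^2} \;\leq\; \frac{21\,\epsilon^2}{d^2} \;=\; \frac{21}{192} \;<\; 1,
\end{equation*}
a contradiction, so some chord projects to length at least $d/(8\sqrt{3})$.

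The main obstacle is the first step: reading off the coordinate constraints on $T,B,L,R,F,D$ from the recursive construction of Lemma~\ref{lem:LemmaVolumeEnclosingBox}, in particular that the diameter conditions force $T-B$ to lie along the height axis and $L_3=R_3$. Once these are in place, the remainder is a short triangle-inequality chain combined with a single pigeonhole against $|u|=1$; the constants $1,4,16$ produced by the chain (summing to $21<192$) are exactly what makes the bound $d/(8\sqrt{3})$ go through.
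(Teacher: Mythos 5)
Your proof is correct, but it takes a genuinely different route from the paper's. The paper works in the projection of $p$ onto the plane perpendicular to $\overline{TB}$: it shows that one of the triangles $\bigtriangleup(L',R',F')$ or $\bigtriangleup(L',R',D')$ has area at least $\frac{dw}{4}$, hence inradius at least $\frac{d}{8}$, lifts points $U,V,W$ of that incircle back into $p$, and then runs a four-case analysis on the direction vector of $g$ over the segments $\overline{BT}$, $\overline{BV}$, $\overline{VT}$, $\overline{WU}$. You instead use only the three chords $\overline{TB}$, $\overline{LR}$, $\overline{FD}$, whose coordinate structure ($T-B$ along the height axis of length $h$; $L$ and $R$ sharing a depth coordinate and differing by $w$ in width; $F$ and $D$ differing by $d$ in depth) is exactly what the paper establishes in the paragraph introducing these six points, and you close with a triangle-inequality chain plus a pigeonhole against $\vert u\vert=1$. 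The computation checks out: the chain gives $\vert u_1\vert<\epsilon/h$, $\vert u_2\vert<2\epsilon/w$, $\vert u_3\vert<4\epsilon/d$, and with $\epsilon^2=d^2/192$ the sum $21\epsilon^2/d^2=21/192<1$ is indeed a contradiction. Your argument is shorter, avoids the incircle construction and the case analysis entirely, and in fact establishes the stronger bound $d/\sqrt{21}$; the one point to keep explicit, which you correctly flag as the main obstacle, is that the constraints on $L$ and $R$ (equal depth coordinate, $\vert L'R'\vert=w$) rely on the bounding box of Lemma~\ref{lem:LemmaVolumeEnclosingBox} being built diameter-first at every level of the recursion, which is how the paper defines and uses it.
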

\begin{proof}
We construct four line segments inside of $p$ such that the projection of at least one of them onto the line has the desired length.

Consider the projection of $p$ onto a plane perpendicular to $\overline{TB}$ as described above (Figure~\ref{fig:BoxWithProjection}). Then $\bigtriangleup (L',R',F')$ or $\bigtriangleup (L',R',D')$ has an area $A \geq \frac{dw}{4}$. The perimeter of the projection of the box, namely $2 (w+d)$, gives an upper bound for the perimeter $U$ of the triangles. It is well known (see ,e.g.,~\cite{Coxeter}) that the radius of a triangle with area $A$ and perimeter $U$ is $r = \frac{2A}{U}$. Hence, we know that the projection of $p$ contains a circle with radius $r$ where
\begin{equation*}
r = \frac{2A}{U} \geq \frac{dw}{4(d+w)} \geq \frac{1}{8} d \text{ , since }d\leq w.
\end{equation*}
See Figure~\ref{fig:CircleInTriangle} for an example.

\begin{figure}
    \centering
    \begin{subfigure}[b]{0.57\textwidth}
        \includegraphics[width=\textwidth]{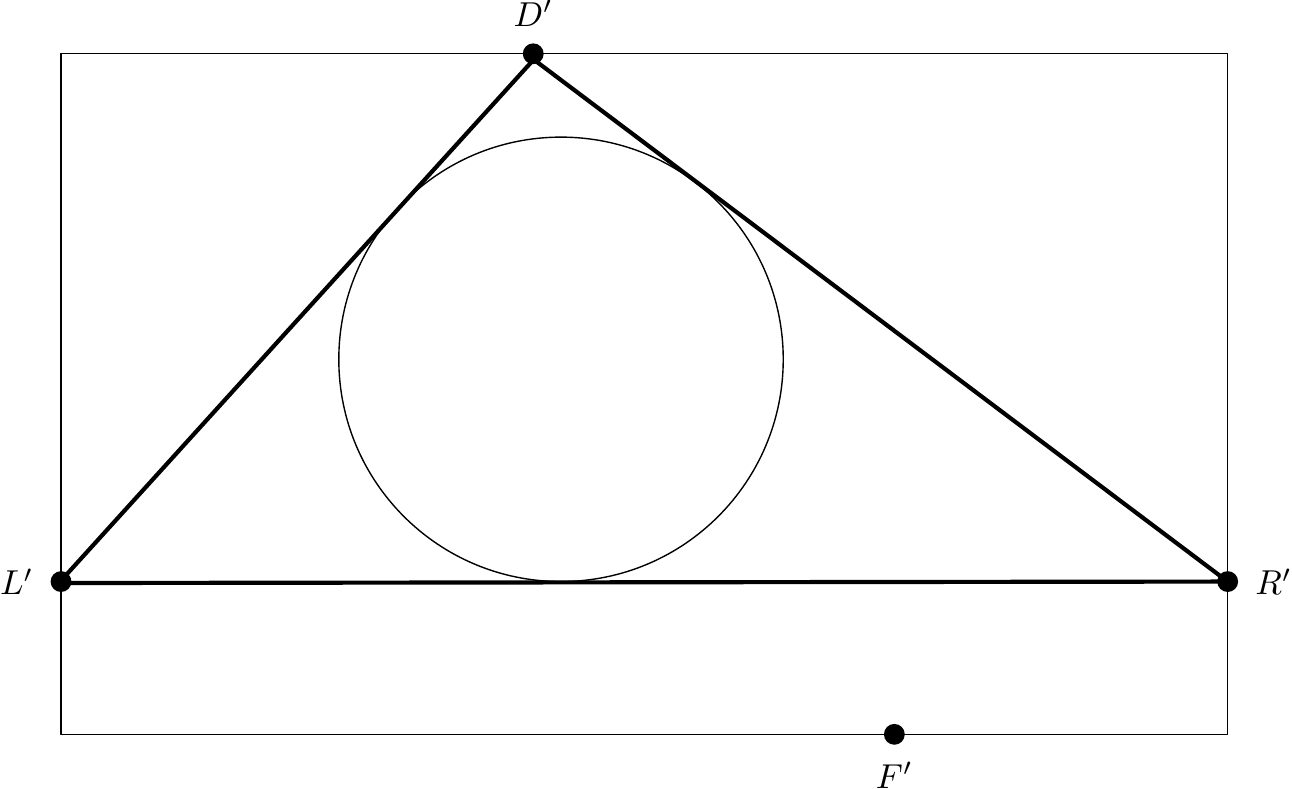}
        \caption{Circle in the projection of $p$ that has radius at least $\frac{1}{8} d$}
        \label{fig:CircleInTriangle}
    \end{subfigure}
      \qquad
    \begin{subfigure}[b]{0.3\textwidth}
        \includegraphics[width=\textwidth]{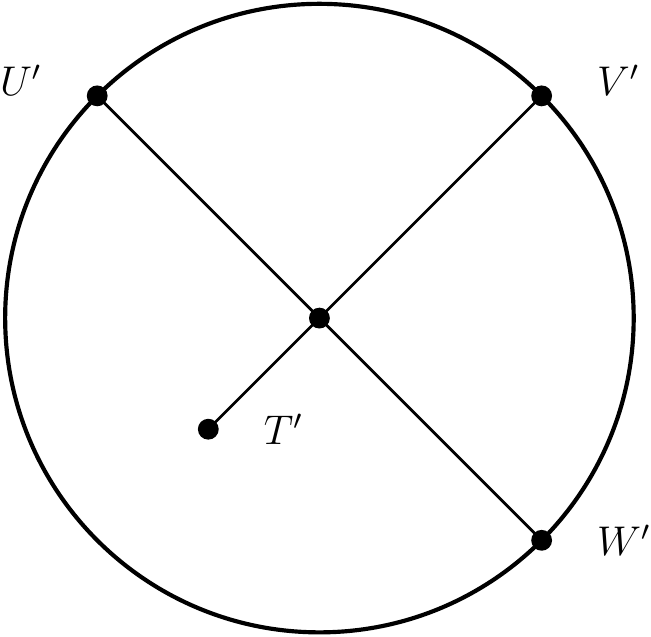}
        \caption{Construction of $U'$, $V'$ and $W'$}
        \label{fig:UVW}
    \end{subfigure}
    \caption{}
    
\end{figure}

Now we can find points $U'$, $V'$, $W'$ in the projection, such that $U'$, $V'$, $W'$ lie on the circle with radius $r$ and $\vert T'V'\vert = k \geq r$, $\vert U'W' \vert = l = 2r$ and $\overline{T'V'} \perp \overline{U'W'}$. To obtain $V'$, we shoot a ray from $T'$ through the center of the circle until we hit the circle and call this point $V'$. $\overline{U'W'}$ is the diameter of the circle perpendicular to $\overline{T'V'}$. See Figure~\ref{fig:UVW} for an example.

Let $U$, $V$, $W$ be preimages of $U'$, $V'$, $W'$ under the projection. Hence, they lie inside $p$. The line segments whose projections on the given line $g$ we consider are $\overline{BT}$, $\overline{BV}$, $\overline{VT}$ and $\overline{WU}$. 

The length of the projection of a line segment onto $g$ is the scalar product of the vector between the endpoints of the line segment and a unit vector with same direction as $g$. To simplify the computation of the scalar product, we define the coordinate system as follows: $B$ is equal to the origin. $T$ lies on the z-axis. The y-coordinate of $V$ is $0$. Then $U$ and $W$ have the same x-coordinate. Now we have
\begin{align*}
\overrightarrow{BT} &= \left( \begin{array}{c} 0 \\ 0 \\ h \end{array} \right)
&\overrightarrow{BV} &= \left( \begin{array}{c} k \\ 0 \\ h_V \end{array} \right)
&\overrightarrow{VT} &= \left( \begin{array}{c} -k \\ 0 \\ h-h_V \end{array} \right)
&\overrightarrow{WU} &= \left( \begin{array}{c} 0 \\ l \\ h_{WU} \end{array} \right),
\end{align*}
for some values $k$, $l$, $h_V$, $h_{WU}$ where $0\leq h_V \leq h$ and $\vert h_{WU} \vert \leq h$. Let $\overrightarrow{g} =\left( \begin{array}{c} x \\ y \\ z \end{array} \right)$ be the direction of $g$ in the defined coordinate system, with $\vert\overrightarrow{g}\vert = 1$. We now look at the lengths of the projections of the line segments onto the given line.

\begin{component}[Case 1: $\vert x \vert \geq \frac{1}{\sqrt{3}}$]
Then
\begin{align*}
\vert \overrightarrow{BV} \cdot \overrightarrow{g}\vert \geq k \vert x \vert \geq \frac{1}{\sqrt{3}\cdot 8} d
\end{align*}
or
\begin{equation*}
\vert \overrightarrow{VT} \cdot \overrightarrow{g}\vert \geq k \vert x \vert \geq \frac{1}{\sqrt{3}\cdot 8} d
\end{equation*}
\end{component}
\begin{component}[Case 2: $\vert z \vert\cdot h \geq \frac{1}{\sqrt{3}\cdot 8}d$]
\begin{equation*}
\vert \overrightarrow{BT} \cdot \overrightarrow{g}\vert = h \cdot \vert z \vert \geq \frac{1}{\sqrt{3}\cdot 8}d
\end{equation*}
\end{component}
\begin{component}[Case 3: $\vert y \vert \geq \frac{1}{\sqrt{3}}$ and $\sgn(y) = \sgn(h_{WU}z)$]
\begin{align*}
\vert \overrightarrow{WU} \cdot \overrightarrow{g}\vert \geq l \vert y \vert \geq \frac{1}{\sqrt{3}\cdot 8} d
\end{align*}
\end{component}
\begin{component}[Case 4: $\vert y \vert \geq \frac{1}{\sqrt{3}}$ and $\sgn(y) \neq \sgn(h_{WU}z)$ and $\vert z \vert\cdot h < \frac{1}{\sqrt{3}\cdot 8}d$]
$ $\newline Note: $\vert h_{WU}z \vert \leq h \vert z \vert < \frac{1}{\sqrt{3}\cdot 8}d$ and $ l \vert y \vert \geq \frac{2}{\sqrt{3}\cdot 8}d$
\begin{equation*}
\vert\overrightarrow{WU} \cdot \overrightarrow{g}\vert = l \vert y \vert - \vert h_{WU} z \vert \geq \frac{1}{\sqrt{3}\cdot 8} d
\end{equation*}
\end{component}
Since $\vert\overrightarrow{g}\vert = 1$, $\vert x \vert \geq \frac{1}{\sqrt{3}}$ or $\vert y \vert \geq \frac{1}{\sqrt{3}}$ or $\vert z \vert \geq \frac{1}{\sqrt{3}}$ holds. Hence, at least one of the 4 cases occurs because $h \geq d$ .
\end{proof}

Consider the polyhedra $p_1, p_2, p_3$ that determine $h_{\max}$, $w_{\max}$ and $d_{\max}$ in the placement the described algorithm computes. $p_1$ contains a line segment of length $h_{\max}$ and so its projections of at the least one of the axes is at least $\frac{1}{\sqrt{3}}h_{\max}$. W.l.o.g. let this axis be the x-axis. Then by Lemma~\ref{lem:TriangleHeightw} the projection of $p_2$ onto the y-z-plane contains a line of length at least $\frac{1}{\sqrt{5}}w_{\max}$. Therefore, the projection of $p_2$ onto the y-axis or the one onto the z-axis has length at least $\frac{1}{\sqrt{2}}\cdot \frac{1}{\sqrt{5}}w_{\max} = \frac{1}{\sqrt{10}}w_{\max}$. The projection of $p_3$ on the remaining axis has length at least $\frac{1}{8\sqrt{3}}d_{\max}$ by Lemma~\ref{lem:projectionLengthd}. An axis parallel box with minimal volume containing $p_1, p_2, p_3$ has at least the described side lengths and so we get the following lemma:
\begin{lem}
For packing convex polyhedra under rigid motions into a minimum-volume axis parallel container, the following inequality holds:
\begin{equation*}
h_{\max}\cdot w_{\max}\cdot d_{\max} \leq 24\sqrt{10} V_{\text{opt}}.
\end{equation*}
\label{lem:hmax_wmax_dmax_convex_polytopes}
\end{lem}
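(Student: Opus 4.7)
The plan is to transfer the three-axes argument from Lemma~\ref{lem:AxisParContBoxes} to the polyhedron setting, using the weaker per-axis projection bounds supplied by Lemmas~\ref{lem:TriangleHeightw} and~\ref{lem:projectionLengthd} in place of the trivial bounds available for boxes. I would fix the three polyhedra $p_1, p_2, p_3$ whose enclosing boxes (obtained via Lemma~\ref{lem:LemmaVolumeEnclosingBox}) realize $h_{\max}$, $w_{\max}$ and $d_{\max}$ in the placement used by the algorithm. Since the optimal axis-parallel container must individually contain each of $p_1, p_2, p_3$, its extent along each coordinate axis can be lower bounded by the length of the projection of one of these polyhedra onto that axis, and the product of the three bounds will give a lower bound on $V_{\text{opt}}$.

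For the first axis, $p_1$ contains the segment of length $h_{\max}$ that arises as the diameter in the construction of its enclosing box. Since the squared coordinates of any unit direction vector sum to $1$, at least one axis receives projection $\geq 1/\sqrt{3}$; w.l.o.g.\ assign this to the $x$-axis, giving a lower bound of $h_{\max}/\sqrt{3}$ on the container's $x$-extent.

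For the second axis I would apply Lemma~\ref{lem:TriangleHeightw} to $p_2$ with the $yz$-plane as the ``given plane''; the lemma then furnishes a segment of length $w_{\max}/\sqrt{5}$ inside $p_2$ parallel to that plane. Projecting this segment onto the $y$- and $z$-axes and invoking a two-dimensional pigeonhole yields one of the two remaining axes, w.l.o.g.\ $y$, with container extent at least $(1/\sqrt{2})\cdot w_{\max}/\sqrt{5} = w_{\max}/\sqrt{10}$. For the last axis, Lemma~\ref{lem:projectionLengthd} applied to $p_3$ and the $z$-axis directly gives a lower bound of $d_{\max}/(8\sqrt{3})$.

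Multiplying the three lower bounds,
\[
V_{\text{opt}} \;\geq\; \frac{h_{\max}}{\sqrt{3}} \cdot \frac{w_{\max}}{\sqrt{10}} \cdot \frac{d_{\max}}{8\sqrt{3}} \;=\; \frac{h_{\max}\,w_{\max}\,d_{\max}}{24\sqrt{10}},
\]
which rearranges to the claimed inequality. I do not anticipate a real obstacle: the three ingredients are already provided by the preceding lemmas, and the only book-keeping point is to ensure that the axes chosen successively for $p_1$ and $p_2$ leave a distinct third axis for $p_3$; this is automatic, because Lemma~\ref{lem:projectionLengthd} applies to any line, in particular to whichever axis remains.
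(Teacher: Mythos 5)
Your proposal is correct and follows essentially the same argument as the paper: the paper likewise bounds the container's $x$-extent by $h_{\max}/\sqrt{3}$ via the diameter segment of $p_1$, obtains $w_{\max}/\sqrt{10}$ on one of the remaining axes from Lemma~\ref{lem:TriangleHeightw} applied to the projection of $p_2$ onto the $yz$-plane, and uses Lemma~\ref{lem:projectionLengthd} for $p_3$ on the last axis, multiplying the three bounds. No substantive difference.
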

From Lemma~\ref{lem:LemmaVolumeEnclosingBox} we know that the volume of the smallest enclosing box for a polyhedron is at most 6 times the volume of the polyhedron. With the previous lemma and this knowledge we derive the following approximation ratio from inequality~(\ref{eq:getAssympAppr}):
\begin{equation}
 \frac{12c}{\left(1-\varepsilon\right)\left(c-1\right)} + \frac{c \cdot 24\sqrt{10}}{\varepsilon}.
 \label{eq:approxPackPolyhedra}
\end{equation}
The running time of this algorithm is determined by the computation of the bounding boxes and the packing of these boxes: $\mathcal{O}\left(m^2 + n\log n\right)$ where $m$ is the total number of vertices of the polyhedra. Hence, we get by minimizing term~(\ref{eq:approxPackPolyhedra}) as before the following theorem.
\begin{thm}
The given algorithm computes an orthogonal container with volume at most $277.59$ times the volume of an orthogonal minimal container for the variant of three-dimensional \textsc{OMCOP} where $n$ convex polyhedra having $m$ vertices in total are to be packed under rigid motions in time $\mathcal{O}\left(m^2 + n\log n\right)$.
\end{thm}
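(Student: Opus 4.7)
The plan is to put together the machinery already assembled in this subsection. First I would describe the algorithm precisely: as a preprocessing step, compute, for each of the $n$ input polyhedra, the enclosing box from Lemma~\ref{lem:LemmaVolumeEnclosingBox} and orient it so that $h_i \geq w_i \geq d_i$; then feed these boxes into the algorithm of Section~\ref{sec:PackCuboidRigidMotions}. The preprocessing takes $\mathcal{O}(m^2)$ time in total because the dimension is fixed and the bounding-box construction costs $\mathcal{O}(m_i^2)$ per polyhedron with $\sum m_i = m$, while the packing of the resulting $n$ boxes costs $\mathcal{O}(n \log n)$, which combine to the claimed running time.

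For the approximation ratio, the crucial observation is that inequality~(\ref{eq:getAssympAppr}) never uses any property of the objects being packed beyond the sum of their volumes and the single product $h_{\max}\cdot w_{\max}\cdot d_{\max}$. So all that needs to change from the cuboid analysis is how these two quantities are bounded against the polyhedral optimum $V_{\text{opt}}$. Lemma~\ref{lem:LemmaVolumeEnclosingBox} gives $\sum_i V(b_i) \leq 6 \sum_i V(p_i) \leq 6 V_{\text{opt}}$, since the polyhedra fit disjointly inside any container. Lemma~\ref{lem:hmax_wmax_dmax_convex_polytopes} gives $h_{\max}\cdot w_{\max}\cdot d_{\max} \leq 24\sqrt{10}\,V_{\text{opt}}$. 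Substituting these two estimates into inequality~(\ref{eq:getAssympAppr}) and using the geometric series $\sum_{l\geq 0}(1-\varepsilon)^l = 1/\varepsilon$ produces exactly the ratio~(\ref{eq:approxPackPolyhedra}).

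It then remains to minimize the expression $\frac{12c}{(1-\varepsilon)(c-1)} + \frac{24\sqrt{10}\,c}{\varepsilon}$ over $c>1$ and $\varepsilon\in(0,1)$, which is the same kind of two-variable optimization carried out in the two preceding theorems, only with different constants. Setting both partial derivatives to zero yields a small system whose solution gives explicit optimal values of $c$ and $\varepsilon$; substituting these back and evaluating numerically bounds the ratio by $277.59$. I expect this optimization to be the main obstacle in the write-up, but only in the sense of careful bookkeeping and numerical verification; there is no further conceptual work, since all the geometric content has already been absorbed into Lemmas~\ref{lem:LemmaVolumeEnclosingBox}, \ref{lem:TriangleHeightw}, \ref{lem:projectionLengthd} and~\ref{lem:hmax_wmax_dmax_convex_polytopes}.
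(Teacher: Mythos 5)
Your proposal matches the paper's argument exactly: bounding boxes from Lemma~\ref{lem:LemmaVolumeEnclosingBox} (contributing the factor $3!=6$ that turns $\frac{2c}{(1-\varepsilon)(c-1)}$ into $\frac{12c}{(1-\varepsilon)(c-1)}$), Lemma~\ref{lem:hmax_wmax_dmax_convex_polytopes} for the $24\sqrt{10}$ term, substitution into inequality~(\ref{eq:getAssympAppr}), and the same two-variable minimization and running-time accounting. The proposal is correct and takes essentially the same route as the paper.
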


\subsubsection*{Convex Container}
If we allow  arbitrary convex containers instead of axis parallel boxes we get the following lemma instead of Lemma~\ref{lem:hmax_wmax_dmax_convex_polytopes}:
\begin{lem}
For packing convex polyhedra under rigid motions into a minimum-volume convex container, the following inequality holds:
\begin{equation*}
h_{\max}\cdot w_{\max}\cdot d_{\max} \leq 24\sqrt{60} V_{\text{opt}}.
\end{equation*}
\end{lem}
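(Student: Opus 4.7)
The plan is to mimic the convex-container analogue of the cuboid case in Section~\ref{sec:PackCuboidRigidMotions}: exhibit three segments inside $V_{\text{opt}}$ whose mutually perpendicular components have lengths $h_{\max}$, $\frac{w_{\max}}{\sqrt{5}}$, and $\frac{d_{\max}}{8\sqrt{3}}$, and then invoke Lemma~6 of~\cite{convexRegions}, which bounds the volume of the convex hull of three mutually perpendicular segments from below by $\frac{1}{6}$ times the product of their lengths. The arithmetic $\frac{1}{6\cdot\sqrt{5}\cdot 8\sqrt{3}} = \frac{1}{48\sqrt{15}} = \frac{1}{24\sqrt{60}}$ then yields exactly the claimed constant.

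Let $p_1, p_2, p_3$ be the polyhedra whose enclosing boxes realize $h_{\max}, w_{\max}, d_{\max}$ respectively. First, $p_1$ contains the chord $\overline{TB}$ from the construction preceding Lemma~\ref{lem:TriangleHeightw}; call it $s_1$ and let $\vec{e}_1$ denote its direction. Next, applying Lemma~\ref{lem:TriangleHeightw} to $p_2$ with the plane perpendicular to $\vec{e}_1$ as the ``given plane'' yields a segment $s_2\subset p_2$ of length at least $\frac{w_{\max}}{\sqrt{5}}$ in some direction $\vec{e}_2\perp\vec{e}_1$. Finally, applying Lemma~\ref{lem:projectionLengthd} to $p_3$ with the line oriented along $\vec{e}_3 := \vec{e}_1\times\vec{e}_2$ produces two points $Q^+,Q^-\in p_3$ whose connecting chord has $\vec{e}_3$-component of length at least $\frac{d_{\max}}{8\sqrt{3}}$. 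Since $V_{\text{opt}}$ is convex and contains $p_1\cup p_2\cup p_3$, it contains $s_1$, $s_2$, $Q^+$, and $Q^-$.

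The one delicate point is that the chord $\overline{Q^+Q^-}$ is not necessarily parallel to $\vec{e}_3$; Lemma~\ref{lem:projectionLengthd} controls only its projection. I expect to handle this by a direct volume calculation: with $\vec{e}_3$ chosen as the $z$-axis, both $s_1$ and $s_2$ lie in planes $z=\text{const}$, so projecting them to the $xy$-plane produces perpendicular segments of the same lengths whose convex hull has area at least $\frac{1}{2}|s_1||s_2|$ (a short planar case analysis over the relative positions of two perpendicular segments). Meanwhile $Q^+,Q^-$ differ in their $z$-coordinates by at least $\frac{d_{\max}}{8\sqrt{3}}$. The convex hull of $s_1\cup s_2\cup\{Q^+,Q^-\}$ therefore dominates a (possibly degenerate) bipyramid over this footprint whose total height along $\vec{e}_3$ is at least $\frac{d_{\max}}{8\sqrt{3}}$, giving volume at least $\frac{1}{6}\cdot h_{\max}\cdot\frac{w_{\max}}{\sqrt{5}}\cdot\frac{d_{\max}}{8\sqrt{3}}$. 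This is precisely Lemma~6 of~\cite{convexRegions} in this slightly non-orthogonal setting, and it delivers $V_{\text{opt}}\geq\frac{h_{\max}w_{\max}d_{\max}}{24\sqrt{60}}$, equivalent to the stated inequality.
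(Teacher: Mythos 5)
Your proof follows the paper's argument exactly: the same three witnesses (a chord of length $h_{\max}$ in $p_1$, a perpendicular segment of length $w_{\max}/\sqrt{5}$ in $p_2$ via Lemma~\ref{lem:TriangleHeightw}, and a chord of $p_3$ whose component in the third direction is at least $d_{\max}/(8\sqrt{3})$ via Lemma~\ref{lem:projectionLengthd}), combined with the $\frac{1}{6}abc$ convex-hull bound of Lemma~6 of~\cite{convexRegions} and the arithmetic $6\cdot\sqrt{5}\cdot 8\sqrt{3}=24\sqrt{60}$. The ``delicate point'' you flag is genuine---the paper simply asserts that $p_3$ contains a segment of that length perpendicular to the first two, which is more than Lemma~\ref{lem:projectionLengthd} literally provides---and your bipyramid computation (fully rigorous when $s_1$ and $s_2$ are coplanar, and extendable to the general case via the determinant form of the three-segments lemma) closes that gap, so your write-up is if anything slightly more careful than the published proof.
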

\begin{proof}
As before let $p_1,p_2,p_3$ be the polytopes that determine $h_{\max}$, $w_{\max}$ and $d_{\max}$. $p_1$ contains a line segment of length $h_{\max}$. By Lemma~\ref{lem:TriangleHeightw} $p_2$ contains a line segment of length $\frac{w_{\max}}{\sqrt{5}}$ that is perpendicular to the first line segment. By Lemma~\ref{lem:projectionLengthd} $p_3$ contains a line segment with length $\frac{d_{\max}}{8\sqrt{3}}$ that is perpendicular to the first two lines. Since any convex body containing three pairwise perpendicular line segments of length $a, b, c$ has volume at least $\frac{1}{6} abc$ (cf. Lemma 6 in \cite{convexRegions}), we get a lower bound on the volume of the convex hull which is also a lower bound for the volume of an optimal container.
\end{proof}

As before we use Lemma~\ref{lem:LemmaVolumeEnclosingBox} and the previous lemma to estimate inequality~(\ref{eq:getAssympAppr}) and optimize the following approximation ratio:
\begin{equation*}
 \frac{12c}{\left(1-\varepsilon\right)\left(c-1\right)} + \frac{c \cdot 24\sqrt{60}}{\varepsilon}.
\end{equation*}
By minimizing this term as before, we yield the following result.
\begin{thm}
The algorithm given in Section~\ref{sec:ConvPolyOMCOP} computes a convex container with volume at most $511.37$ times the volume of a minimal convex container for packing $n$ convex polyhedra having $m$ vertices in total under rigid motions in time $\mathcal{O}\left(m^2 + n\log n\right)$.
\end{thm}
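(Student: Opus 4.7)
The plan is to reuse the algorithm from Section~\ref{sec:ConvPolyOMCOP} verbatim (first compute an enclosing box $b_i$ for each polyhedron $p_i$ via Lemma~\ref{lem:LemmaVolumeEnclosingBox}, then apply Algorithm~\ref{alg:withoutRotations} to these boxes after orienting each so that $h_i \geq w_i \geq d_i$) and to change only the analysis. The output of this algorithm is an axis-parallel box, which is in particular a convex container; what changes from the previous theorem is the baseline against which we compare, namely $V_{\text{opt}}$ now denotes the volume of the smallest convex container rather than the smallest axis-parallel container.

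I would then revisit inequality~(\ref{eq:getAssympAppr}), which depends on exactly two estimates. The first, $\sum_i V(b_i)\leq V_{\text{opt}}$ in the cuboid setting, must here be replaced by $\sum_i V(b_i)\leq 6\sum_i V(p_i)\leq 6V_{\text{opt}}$, since each bounding box can inflate the corresponding polyhedron's volume by a factor of up to $6$ (Lemma~\ref{lem:LemmaVolumeEnclosingBox}) and the total polyhedron volume still lower-bounds the volume of any container, including the optimal convex one. The second estimate, a bound on $h_{\max}w_{\max}d_{\max}$, is supplied by the lemma just proved: $h_{\max}w_{\max}d_{\max}\leq 24\sqrt{60}\,V_{\text{opt}}$. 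Substituting both into~(\ref{eq:getAssympAppr}) produces the approximation ratio
\begin{equation*}
\frac{12c}{(1-\varepsilon)(c-1)} + \frac{24\sqrt{60}\,c}{\varepsilon}.
\end{equation*}

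It remains to minimize this expression over $c>1$ and $\varepsilon\in(0,1)$. Setting both partial derivatives to zero gives a system identical in form to the one solved for axis-parallel containers in Section~\ref{sec:ConvPolyOMCOP}; only the constant $24\sqrt{60}$ replaces $24\sqrt{10}$. The main obstacle is simply the arithmetic of evaluating this optimum, which numerically is bounded above by $511.37$; no new structural idea is required beyond what the two substitutions already provide. Finally, the runtime analysis is unchanged: computing the enclosing boxes costs $\mathcal{O}(m^2)$ by Lemma~\ref{lem:LemmaVolumeEnclosingBox}, the packing step costs $\mathcal{O}(n\log n)$ as before, and so the overall time is $\mathcal{O}(m^2+n\log n)$.
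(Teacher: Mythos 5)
Your proposal matches the paper's own argument essentially verbatim: the same algorithm is reused, inequality~(\ref{eq:getAssympAppr}) is re-estimated using the factor $6$ from Lemma~\ref{lem:LemmaVolumeEnclosingBox} and the bound $h_{\max}w_{\max}d_{\max}\leq 24\sqrt{60}\,V_{\text{opt}}$ from the preceding lemma, yielding the ratio $\frac{12c}{(1-\varepsilon)(c-1)}+\frac{24\sqrt{60}\,c}{\varepsilon}$, which is then minimized over $c$ and $\varepsilon$ exactly as before. The substitutions, the resulting expression, and the runtime analysis are all the ones the paper uses, so the proof is correct and takes the same route.
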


\bibliographystyle{plain}
	\bibliography{references}

\newcommand{\SortNoop}[1]{}
\begin{thebibliography}{10}

\bibitem{ConvexPolygons}
Helmut Alt, Mark de~Berg, and Christian Knauer.
\newblock Approximating minimum-area rectangular and convex containers for
  packing convex polygons.
\newblock In {\em {Proc.} 23th {Annu. European Sympos. Algorithms (ESA)}},
  pages 25--34, 2015.

\bibitem{2D-BP_NoAPTAS}
Nikhil Bansal, Jos{\'e}~R. Correa, Claire Kenyon, and Maxim Sviridenko.
\newblock Bin packing in multiple dimensions: inapproximability results and
  approximation schemes.
\newblock {\em {Math. Oper. Res.}}, 31(1):31--49, 2006.

\bibitem{2D-BP_asym1.405}
Nikhil Bansal and Arindam Khan.
\newblock Improved approximation algorithm for two-dimensional bin packing.
\newblock In {\em {Proc.} 25th {Annu. ACM-SIAM Sympos. Discrete Algorithms
  (SODA)}}, pages 13--25, 2014.

\bibitem{Coxeter}
H.~S.~M. Coxeter.
\newblock {\em Introduction to geometry}, page~12.
\newblock John Wiley \& Sons, Inc., New York-London-Sydney, second edition,
  1969.

\bibitem{Knapsack}
Florian Diedrich, Rolf Harren, Klaus Jansen, Ralf Th{\"o}le, and Henning
  Thomas.
\newblock Approximation algorithms for 3{D} orthogonal knapsack.
\newblock {\em {J. Comput. Sci. Tech.}}, 23(5):749--762, 2008.

\bibitem{JansenPraedel}
Klaus Jansen and Lars Pr{\"a}del.
\newblock A new asymptotic approximation algorithm for 3-dimensional strip
  packing.
\newblock In {\em {Proc.} 40th {International Conference on Current Trends in
  Theory and Practice of Computer Science (SOFSEM)}}, pages 327--338, 2014.

\bibitem{2D-SP_AFPTAS}
Claire Kenyon and Eric R{\'e}mila.
\newblock A near-optimal solution to a two-dimensional cutting stock problem.
\newblock {\em {Math. Oper. Res.}}, 25(4):645--656, 2000.

\bibitem{convexRegions}
A.~M. Macbeath.
\newblock A compactness theorem for affine equivalence-classes of convex
  regions.
\newblock {\em Canadian J. Math.}, 3:54--61, 1951.

\bibitem{3D-BP_4.89}
F.~K. Miyazawa and Y.~Wakabayashi.
\newblock Three-dimensional packings with rotations.
\newblock {\em {Comput. Oper. Res.}}, 36(10):2801--2815, 2009.

\bibitem{MASTER}
L\'{e}onard von Niederhäusern.
\newblock Packing polygons: Research of approximation algorithms.
\newblock Master's thesis, Freie Universität Berlin and \'{E}cole
  Polytechnique F\'{e}d\'{e}rale de Lausanne, 2014.

\end{thebibliography}
		
\end{document}